\documentclass[runningheads,orivec]{llncs_manual}

\usepackage[margin=2.5cm]{geometry}
\usepackage[fontsize=11pt]{scrextend}
\usepackage{amsfonts,amsmath,amssymb}
\usepackage{graphicx}
\usepackage[ruled,vlined,linesnumbered]{algorithm2e}
\usepackage{quiver}
\usepackage{complexity}
\usepackage[strings]{underscore}
\usepackage{cite}
\usepackage[T1]{fontenc}
\usepackage{lmodern}
\usepackage{textcomp}
\usepackage{url}
\usepackage{bbold}
\DeclareGraphicsExtensions{.png,.pdf,.svg}
\usepackage{enumerate}

\SetKwInput{KwShared}{Shared}
\SetKwInput{KwInitial}{Initial}
\SetKwInput{KwPattern}{write-read pattern  }
\SetKwInput{KwImmediately}{immediately}


\DeclareMathOperator{\Ima}{Im}

\def\REVIEW{0}  

\if \REVIEW 1
    \newcommand{\guillermo}[1]{{\color{blue}[Guillermo: #1]}}
    \newcommand{\petr}[1]{{\color{red}[Petr: #1]}}
\else
    \newcommand{\guillermo}[1]{}
    \newcommand{\petr}[1]{}
\fi

\def\S{\ensuremath{\mathcal{S}}}

\def\D{\ensuremath{\mathcal{D}}}

\def\A{\ensuremath{\mathcal{A}}}
\def\B{\ensuremath{\mathcal{B}}}

\def\I{\ensuremath{\mathcal{I}}}
\def\O{\ensuremath{\mathcal{O}}}

\def\K{\ensuremath{\mathcal{K}}}

\def\xIIS{\Xi_{I\!I\!S}}

%

\begin{document}
%
\title{Space-Time Trade-off in Bounded Iterated Memory 
}
%
\author{Guillermo Toyos-Marfurt\inst{1,2} \and Petr Kuznetsov\inst{1}}
%
\authorrunning{G. Toyos-Marfurt \and P. Kuznetsov}
%
\institute{Télécom Paris, Institut Polytechnique de Paris, France \and
Télécom SudParis, Inria Saclay, Institut Polytechnique de Paris, France \\
\email{\{guillermo.toyos,petr.kuznetsov\}@telecom-paris.fr}}
\maketitle              
\begin{abstract}
The celebrated asynchronous computability theorem (ACT) characterizes tasks solvable in the read-write shared-memory model using the \emph{unbounded full-information protocol}, where in every round of computation, each process shares its complete knowledge of the system with the other processes. 
Therefore, ACT assumes shared-memory variables of unbounded capacity. 
It has been recently shown that \emph{bounded} variables can achieve the same computational power at the expense of extra rounds.
However, the exact relationship between the bit capacity of the shared memory and the number of rounds required in order to implement one round of the full-information protocol remained unknown. 

In this paper, we focus on the asymptotic \emph{round complexity} of bounded iterated shared-memory algorithms that simulate, up to isomorphism, the unbounded full-information protocol.
We relate the round complexity to the number of processes $n$, the number of iterations of the full information protocol $r$, and the bit size per shared-memory entry $b$.
By analyzing the corresponding \emph{protocol complex}, a combinatorial structure representing reachable states, we derive necessary conditions and present a \emph{bounded} full-information algorithm tailored to the bits available $b$ per shared memory entry.
We show that for $n>2$, the round complexity required to implement the full-information protocol satisfies $\Omega((n!)^{r-1} \cdot 2^{n-b})$.
Our results apply to a range of iterated shared-memory models, from regular read-write registers to atomic and immediate snapshots. 
Moreover, our bounded full-information algorithm is asymptotically optimal for the iterated collect model and within a linear factor $n$ of optimal for the snapshot-based models.

\keywords{Theory of computation \and Distributed Computing Models \and Iterated Models \and Combinatorial Topology \and Communication Complexity \and Round Complexity}

\end{abstract}

\section{Introduction}
\label{sec:intro}
In distributed systems, one of the key characteristics is the system’s computational power, that is, the range of tasks it allows for solving. 
Models of distributed computing are defined by a plethora of parameters and their comparative task computability analysis remains a significant challenge.
The celebrated Asynchronous Computability Theorem (ACT)~\cite{HS99} characterizes wait-free task computability for read-write shared-memory models. 

Combinatorial topology~\cite{distCompTopo} has been proven essential in characterizing such iterative shared-memory models~\cite{HS99,GKM14-podc,KRH18}.
Specifically, the set of executions of a given model can be represented as a \emph{simplicial complex}: a combinatorial structure that enables analysis of the model's key properties.
In this context, a distributed \emph{task} is defined as a tuple $(\I,\O, \Delta)$, where $\I$ and $\O$ represent the \emph{input} and \emph{output} complexes, defining the task’s inputs and outputs, and $\Delta: \I \rightarrow 2^{\O}$ is a mapping that associates each input assignment with the set of \emph{allowed} output assignments. 

ACT~\cite{HS99} states that a task is \textit{solvable} if and only if there exists a \emph{chromatic} map from $\I$ to $\O$ that \emph{respects} $\Delta$. \footnote{In the special case when the task is \emph{colorless}, the characterization boils down to the existence of a continuous map from $|\I|$ to $|\O|$ (geometric realizations of $\I$ and $\O$).}
This characterization assumes the \emph{full-information protocol}, in which the processes share their \emph{complete} (rapidly growing) state in every round of computation, which relies on shared-memory variables of \emph{unbounded} size.

It has been proved that, strictly speaking, in \emph{wait-free} read-write iterated models, unbounded memory is not necessary for preserving computational power~\cite{delporte2024computational}. 
Indeed, variables of minimal (one-bit) capacity can be used to implement an iteration of the full-information protocol, at the expense of running through multiple rounds.  
But this raises a fundamental question:
What is the cost of bounded memory? 
Specifically, while bounded memory may be theoretically sufficient for simulating full-information, what practical costs, such as round complexity, are required to achieve the same computing power?

This paper characterizes a family of iterated shared-memory algorithms~\cite{rajsbaum2010iterated}, called $\mathit{ITER}$, that implement the full-information protocol. 
We asymptotically determine the necessary number of rounds an $\mathit{ITER}$ algorithm should go through so that its protocol complex becomes isomorphic to $r$ rounds of the full-information protocol.

Our approach uses the concept of process \emph{distinguishability} to map shared memory values in a way that transmits full-information within bounded bits. 
By mapping shared memory entries to states of the full-information protocol, we derive the number of rounds required to emulate $r$ iterations of the full-information protocol, as a function of the available registers and the number of processes $n$.

Our contributions are as follows. 
For a given input complex $\I$, we establish the relationship between round complexity and available bits $b$ per shared-memory entry required to simulate $r$ iterations of a full-information protocol. 
Specifically, for systems with three or more processes, $(n>2)$, we show that the round complexity (i.e., the asymptotic number of iterations required by the algorithm) has a lower bound of  $\Omega( {(n!)^{r-1} \cdot 2^{n-b}})$.
For two processes, the round complexity is $1$, achievable with just $2$-bit registers~\cite{prequel}. 
We also present a construction to obtain bounded full-information algorithms that are asymptotically optimal in round complexity for the iterated collect model, and execute at most $O( {(n!)^{r-1} \cdot 2^{n-b} \cdot n})$ rounds in the atomic and immediate snapshot models.
These results apply to a broad class of iterated shared-memory algorithms, $\mathit{ITER}$, which include well-known iterated models such as the Iterated Immediate Snapshot (IIS)~\cite{BG93a}, Iterated Atomic Snapshot (IAS)~\cite{10.5555/2821576,bouzidStrg}. and Iterated collect (IC)~\cite{DBLP:books/daglib/0017536}.

\subsubsection{Roadmap.}
The paper is organized as follows. 
We overview the related work in Section~\ref{sec:related}. 
In Section~\ref{sec:Preliminaries}, we recall the combinatorial topology tools and concepts used throughout this work.
In Section~\ref{sec:sys_model}, we describe the iterated family of algorithms $\mathit{ITER}$.
Then in Section~\ref{sec:lower_condition}, we give the necessary conditions that algorithms in $\mathit{ITER}$ must satisfy for having a protocol complex equivalent to the full-information protocol.
Next, in Section~\ref{sec:greedy_star}, we introduce Greedy Star, a construction for obtaining bounded full-information protocols based on the combinatorial condition of Section~\ref{sec:lower_condition}.
Finally, in Section~\ref{sec:conclusions}, we discuss the final results.

\section{Related work}
\label{sec:related}

This paper builds on the research started in~\cite{prequel}, and further extended in~\cite{online_version}, which characterized the iterated immediate snapshot model (IIS)~\cite{BG93a,GR10-opodis} by analyzing its protocol complex, which is equivalent to the iterated standard chromatic subdivision~\cite{chromsubdivision}. 
The authors determine the necessary and sufficient conditions on the amount of shared-memory to simulate the full-information protocol \emph{in the same number of rounds} as its unbounded counterpart.
Our work extends this line of research by allowing an iterated algorithm to perform multiple rounds for a single iteration of the unbounded protocol, examining the relationship between round complexity and bit complexity.
We further generalize these ideas to a broader class of protocols, which we refer to as $\textit{ITER}$.

The Asynchronous Computability Theorem (ACT), introduced by Herlihy and Shavit~\cite{HS99}, showed the impossibility of wait-free set agreement, solving the long-standing problem at the time~\cite{SZ93,BG93a}.
Herlihy, Kozlov and Rajsbaum~\cite{distCompTopo} offered a comprehensive treatment of combinatorial topology as a tool in distributed computing. 
Hoest and Shavit~\cite{hoest} proposed an alternative IIS model to study its execution time complexity.

Delporte-Gallet et. al.~\cite{2process_complexity} studied the \emph{wait-free} solvability of tasks using $1$-bit messages in dynamic networks.
Multiple works considered iterated algorithms in more general model classes. 
Bouzid et. al.~\cite{bouzidStrg} established an equivalence relationships between iterated and non-iterated read-write memory adversarial models.
Kuznetsov et al.~\cite{KRH18} proposed a characterization of task solvability in a large class of adversarial iterated models.  

Recently, Delporte-Gallet et. al.~\cite{delporte2024computational} characterized the computational power of bounded registers in the wait-free and $t$-resilient models. 
Their focus was on (task) \emph{computability}, not addressing the relationship between bounded shared-memory size and \emph{round complexity}.
In this paper, we fill this gap.

\section{Preliminaries}
\label{sec:Preliminaries}
In this section, we briefly review the key definitions and concepts, introduced in~\cite{prequel}, which are essential for this work.

\subsubsection{Simplicial Complex.} 
We represent the states of iterated shared-memory protocols as topological-combinatorial spaces, called \emph{simplicial complexes}~\cite{distCompTopo}: 
a set of \emph{vertices} and an inclusion-closed set of vertex subsets, called \emph{simplices}. 
We call the simplicial complex of $n+1$ vertices with its power set the $n$-dimensional simplex $\Delta^n$.
The leftmost triangle in Figure~\ref{fig:subdivisions} corresponds to $\Delta^2$.
The \emph{dimension} of a simplex $\Delta$, denoted as $dim(\Delta)$, is its number of vertices $V(\Delta)-1$.
The dimension of a simplicial complex is equal to the dimension of the largest simplex it contains.
We call \textit{faces} the simplices contained in a simplicial complex, where a $0$-face is a vertex.
The set $\textit{Faces}(\A)$ consists of all faces of the simplicial complex $\A$.
As an example, $\textit{Faces}(\Delta^2)$ has three 0-faces (vertices), three 1-faces (edges), and one 2-face (the simplex itself).
We call \textit{facets} the simplices which are not contained in any other simplex. 
We denote the set of vertices of a simplicial complex $\mathcal{A}$ as $V(\mathcal{A})$.
Two vertices are \emph{adjacent} if there is a $1$-face (an ``edge'') in the simplicial complex containing both vertices.
We denote $\mathrm{deg}(\A,v)$ the number of adjacent vertices to a vertex $v$ in $\A$.
We call a simplicial complex \emph{chromatic} when it comes with a \emph{coloring function} $\pi : V(\A) \rightarrow \Pi$ that assigns every vertex to a process identifier. We consider that all simplicial complexes are chromatic. 
Given a simplicial complex $\mathcal{A}$, the \emph{star} of $\mathcal{S}\subseteq\mathcal{A}$, $\mathrm{St}(\mathcal{A},\mathcal{S})$, is a subcomplex made of all simplices in $\mathcal{A}$ containing a simplex of $\mathcal{S}$ as face. 

\subsubsection{Subdivisions.}
In general, a subdivision operator $\tau : \A \rightarrow \B$ is a map that ``divides'' the simplices in $\A$ into smaller simplices. For a rigorous definition see~\cite{distCompTopo}.  
We say that a simplicial complex $\B$ subdivides $\A$ if there exists a subdivision operator $\tau$ such that $\B = \tau(\A)$. 
Subdivisions are intersection preserving: for any subcomplexes $\A,\B \subseteq \I$, $\tau(\A) \cap \tau(\B) = \tau(\A \cap \B)$.
The \emph{standard chromatic subdivision} of a simplicial complex $\mathcal{A}$, denoted as $\mathrm{Ch}\ \mathcal{A}$, is a complex whose vertices are tuples $(c,\sigma)$ where $c$ is a color and $\sigma$ is a face of $\mathcal{A}$ containing a vertex of color $c$. 
A set of vertices of $\mathrm{Ch}\ \mathcal{A}$ defines a simplex if for each pair $(c,\sigma)$ and $(c',\sigma')$, $c\neq c'$ and either $\sigma \subseteq \sigma'$ or $\sigma' \subseteq \sigma$. 
For $\Delta^n$, the vertices $(c,\Delta^n)$ define a \emph{central simplex} in $\mathrm{Ch}\ \mathcal{A}$, these vertices are called \emph{central vertices}.
The standard chromatic subdivision can be seen as the ``colored'' analog of the \emph{standard barycentric subdivision}~\cite{distCompTopo}. 
Note that $\mathrm{Ch}$ is itself a subdivision operator: for any simplicial complex $\A$, the complex $\mathrm{Ch}\ \mathcal{A}$ subdivides $\mathcal{A}$.
For instance, if $\mathcal{A} = \Delta^2$, then $\mathrm{Ch}\,\Delta^2$ subdivides
$\Delta^2$ as shown in Figure~\ref{fig:subdivisions} (top right).

In this paper, we characterize a necessary condition for full-information protocols using the combinatorial growth of vertex degrees under iterative chromatic subdivisions of an input complex.
Specifically, we use Theorem 5 of~\cite{online_version}, originally stated in terms of the number of faces in the local neighborhood of a vertex. 
In our setting, we consider the case corresponding to the number of $1$-dimensional faces adjacent to a vertex, that is, its degree.

\begin{lemma}[Asymptotic bound on the iterative chromatic subdivision~\cite{online_version}]\label{lemma:asymp_boundf}
Let $r>1$, $\K$ a $n$-dimensional simplicial complex and $v \in V(\K)$. The following is a tight asymptotic bound on the degree of $v$ in $\mathrm{Ch}^r\ \K$:
\[
 \mathrm{deg}(\mathrm{Ch}^r\ \K,v) \in \Theta\bigg((n!)^{r-1} 2^n n\bigg)
\]
\end{lemma}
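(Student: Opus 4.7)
The plan is to derive this bound as a specialization of Theorem~5 in~\cite{online_version}, which provides an asymptotic count of $k$-dimensional faces in the local neighborhood (star) of a vertex $v$ after $r$ iterations of the standard chromatic subdivision. Since the degree $\mathrm{deg}(\mathrm{Ch}^r\ \K,v)$ is precisely the number of $1$-faces of $\mathrm{Ch}^r\ \K$ containing $v$, the claim should follow by substituting $k=1$ into the existing bound and identifying the resulting combinatorial coefficient.

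First, I would restate Theorem~5 from~\cite{online_version} in the form we need: for $r>1$, the number of $k$-faces in $\mathrm{St}(\mathrm{Ch}^r\ \K, v)$ grows as $\Theta\bigl((n!)^{r-1}\cdot F(n,k)\bigr)$, where $F(n,k)$ is a combinatorial factor depending only on the local structure of a single chromatic subdivision. The factor $(n!)^{r-1}$ accounts for the multiplicative growth of the star: each $n$-simplex of $\mathrm{Ch}^{i}\ \K$ containing $v$ is refined under the next application of $\mathrm{Ch}$ into a fixed number of sub-simplices containing $v$, scaling the local count by $n!$ at every step after the first. The reason the exponent is $r-1$ rather than $r$ is that the first application of $\mathrm{Ch}$ produces a fixed-size local neighborhood of $v$ in $\mathrm{Ch}\ \K$; only subsequent subdivisions multiply by $n!$.

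Second, I would specialize to $k=1$ and compute $F(n,1)$. Recall that a vertex of $\mathrm{Ch}\ \Delta^n$ is a pair $(c,\sigma)$ with $c\in\sigma$, and two vertices $(c,\sigma)$, $(c',\sigma')$ are adjacent iff $c\neq c'$ and one of $\sigma,\sigma'$ contains the other. For a central vertex $(c,\Delta^n)$, adjacency is with any $(c',\sigma')$ satisfying $c'\neq c$ and $c'\in\sigma'\subseteq\Delta^n$, giving $n\cdot 2^n$ such vertices (choose a color $c'\neq c$ in $n$ ways, then any face $\sigma'\ni c'$ of which there are $2^n$). Hence $F(n,1)=2^n n$, matching the claimed coefficient. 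For non-central vertices, the exact count differs but remains of the same order, absorbed by the $\Theta$.

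Combining the two factors yields the tight bound $\Theta\bigl((n!)^{r-1}\cdot 2^n n\bigr)$. The main obstacle I anticipate is not the algebraic calculation itself but rather handling the case distinctions in $F(n,1)$: when $v$ lies on the boundary of $\K$ or on a lower-dimensional face, its star in $\mathrm{Ch}\ \K$ is smaller than that of an interior vertex of a full $n$-simplex, and one must verify that the asymptotic order is preserved in every case. I expect to resolve this by showing that the local dimension at $v$ in $\K$ is still $n$ (by hypothesis $\K$ is $n$-dimensional, so $v$ belongs to at least one $n$-simplex), which is enough to guarantee the $2^n n$ leading term up to a constant factor.
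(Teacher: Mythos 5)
The paper provides no proof for this lemma: the prose immediately preceding it explicitly imports Theorem~5 of~\cite{online_version} and restricts it to $1$-dimensional faces (i.e.\ the degree), which is precisely the reduction you propose, so your route and the paper's coincide. One caution on the expository gloss: the claim that each further application of $\mathrm{Ch}$ ``scal[es] the local count by $n!$'' is not immediate --- the number of facets of $\mathrm{Ch}\,\Delta^n$ incident to a corner vertex is an ordered Bell number rather than a factorial, and passing from facet counts to vertex degrees in $\mathrm{Ch}^r$ requires the $\mathrm{f}$-vector analysis carried out in~\cite{online_version} --- but since your argument ultimately rests on the citation rather than on this sketch, the approach is essentially the same as the paper's.
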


\subsubsection{Tasks.} 
A \emph{distributed task} is defined as a tuple $(\I,\O, \Delta)$, where $\I$ and $\O$ are, resp., the \emph{input complex} and the \emph{output complex}, describing the task's input and output configurations, and $\Delta: \I \rightarrow 2^{\O}$ is a map relating each input assignment to all output assignments allowed by the task.
In this work we consider general \emph{colored} tasks, where $\I$ and $\O$ are chromatic simplicial complexes, allowing processes to have different sets of inputs and outputs.
To solve a task, a distributed system uses a (communication) \emph{protocol} to share information between processes.
The set of configurations reachable by the protocol from a given initial configuration  are expressed via a carrier map, denoted by $\Xi$.
A carrier map is a function from a simplicial complex to its power set that preserves face inclusion: for all simplices $\tau, \sigma \in \textit{Faces}(\mathcal{I})$ with $\tau \subseteq \sigma$, it holds that $\Xi(\tau) \subseteq \Xi(\sigma)$. We refer to this function as the \emph{protocol map}.
From the protocol complex $\Xi(\mathcal{I})$, a 
simplicial map (i.e. a function that maps simplices to simplices) determines the outputs of processes by mapping the simplices of the protocol complex to an output complex, respecting the specification of the problem given by $\Delta$: $\delta \circ \Xi(\mathcal{I}) \subseteq \Delta(\mathcal{I})$. We refer to $\delta$ as the \emph{decision function}.
Note that, unlike linearizability~\cite{HW90}, correctness of a distributed task is expressed topologically: a protocol is correct if it maps every input simplex to an allowed output simplex respecting the task specification $\Delta$.

\section{System Model}
\label{sec:sys_model}

\subsection{Iterated Memory}
\label{subsec:computational_moodel}

We introduce $\mathit{ITER}$, a family of iterated memory models.
Algorithm~\ref{alg:IterMemory_algo} shows the pseudo-code of a model in $\mathit{ITER}$. 
The motivation behind this  family is to provide a \emph{generic} representation that covers all iterated read-write shared-memory protocols, allowing us to characterize them.

Computation is structured in a sequence of \textit{shared memory layers} $M[1],\ldots,M[R]$. 
In each layer, there is an array of $n$ SWMR registers,  one register per process, initialized with $\bot$. 
Processes then can write to their dedicated registers and read the registers of the layer.
The $r$-th round of algorithm $A$, denoted $A(r)$, consists of each process performing a write to its register on $M[r]$, reading all registers of the layer, and updating its local state accordingly (i.e., one iteration of the loop in Algorithm~\ref{alg:IterMemory_algo}).

The models in $\mathit{ITER}$ are \emph{wait-free}: each process reaches an output in a finite number of its own steps, regardless of the other processes.
Each process is initially assigned an \emph{input}, and maintains its state in a local variable $s$.
In each iteration, the process writes a representation of its current state using an $\textit{encode}$ function (line~$2$).
Then, it reads the whole memory layer (line~$3$) and changes the internal process state according to a $\textit{next\_state}$ transition function.
Finally, after a fixed number of rounds $R$, the process returns a value according to its task-specific decision function $\delta_i$ according to its internal state.

\begin{algorithm}
\SetAlgoLined
\KwShared{$M[R,|\Pi|]$ array of $R$ arrays of shared memory, each with $|\Pi|$ entries.}
\KwInitial{$v = input(i)$ $\triangleright$ What the process sees. At first, its input.}
\KwInitial{$s \gets next\_state(v,\bot,0)$ $\triangleright$ Initial state of the process.}
\For{$r:=1$ to $R$}{
    \KwPattern{}
    \ \ \ \ $M[r,i]$ $\gets$ $write_P(encode(s,r))$\;
    \ \ \ \ $v$ $\gets$ $read_P(M[r])$\;
    $s \gets next\_state(s,v,r)$
}
\Return $\delta_i(s)$
\caption{The family of iterated memory algorithms. Code for process $p_i \in \Pi$, $R>0$ iterations and write-read pattern $P$.}
\label{alg:IterMemory_algo}
\end{algorithm}


An algorithm $A \in \mathit{ITER}$ is characterized by its $\textit{encode}$ and $\textit{next\_state}$ functions, the access pattern to shared memory it uses in its write-read primitives, the number of iterations $R$ and the decision functions $\delta_i$.
Depending on the implementation of the $\textit{read}_P$ and $\textit{write}_P$ primitives, we can obtain different well-known iterated models present in the distributed systems literature:

\begin{itemize}
  \item \textbf{Iterated Collect (IC):} Each layer is read using a \emph{collect} operation, where each process reads the $n$ registers of the layer one by one in any order~\cite{DBLP:books/daglib/0017536}.
\item \textbf{Iterated Atomic Snapshot (IAS):}  Instead of reading each SWMR register independently, the read implements a snapshot operation which \emph{atomically} reads all registers at the same time~\cite{10.5555/2821576}.
 \item \textbf{Iterated Immediate Snapshot (IIS):}  The memory is accessed using immediate snapshots, where the read and write primitives are executed immediately as if they were a single atomic operation~\cite{petrBook}.
\end{itemize}

In this work, we will focus on the \emph{full-information} protocols~\cite{distCompTopo}: In each layer, the processes try to communicate all their ``knowledge'' by writing their \emph{complete views} of the system. 
The encoding function thus should transmit all the necessary information by representing these views as a sequence of bits. 

Note that if both the encoding and the state function are the identity -- that is, $\textit{encode}: state \mapsto state$ and $\textit{next\_state}: (view,state) \mapsto view$ -- the protocol corresponds to the \emph{unbounded} full-information protocol and the bit complexity is $n^{r}$ in the $r$-th iteration. 
Here, only the \emph{read-write pattern} characterizes the algorithm. 
We denote this particular set of algorithms as $\textit{FI}_P$, where $P$ is the pattern it uses. 
For example, $\textit{FI}_{IIS}$ refers to the well-studied full-information iterated immediate snapshot (IIS) algorithm, $\textit{FI}_{IAS}$ to its atomic snapshot counterpart, and $\textit{FI}_{IC}$ to the iterated collect variant.

However, if we impose restrictions on the size of outputs in the encoding function we can obtain algorithms that can be implemented using \emph{bounded-size} registers. 
For instance, by using the immediate snapshot access pattern and a bounded encoding function, we get the Bounded Iterated Immediate Snapshot (B-IIS) model studied in~\cite{prequel}.
Thus, the objective of this work is to characterize algorithms in $\mathit{ITER}$ that are equivalent to $\textit{FI}_P$ but use bounded memory at each shared-memory layer. 

For brevity, given an algorithm $A \in \mathit{ITER}$, let $\S$ be the set of possible states of the variable $s$, $V$ the set of possible views obtained by the read primitive and $E$ the image of $\textit{encode}$ (also called \emph{encoding set}).
An algorithm $A \in \mathit{ITER}$ is equipped with a sequence of functions $\{\omega_{k}\}_{k \in [R]}$ and $\{\sigma_{k}\}_{k \in [R]}$ such that at round $r$, $\sigma_r \in \{\sigma_{k}\}_{0<k \leq R}, \sigma_r : \S \times V \mapsto \S$ and $\omega_r \in \{\omega_{k}\}_{0<k\leq R}, \omega_r : \S \mapsto E$ are used as the $\textit{encode}$ and $\textit{next\_state}$ functions respectively.


\subsection{Topological Model}

We consider a system  of $n$ asynchronous processes, which can fail arbitrarily by crashing.
Processes communicate using an algorithm $A$ from the family of iterated shared memory algorithms $\mathit{ITER}$ described in section~\ref{subsec:computational_moodel}.

We denote by $\Xi_A$ the protocol map of an algorithm $A \in \mathit{ITER}$. 
Since $A$ uses a different pair of functions $(\omega_r, \sigma_r)$ at each iteration $r$, it induces a distinct intermediate protocol map $\Xi^*_{A(r)}$ for each round $r$.
We write $\Xi_{A(k)}(\I)$ to refer to the protocol complex obtained after applying the sequence of intermediate protocol maps from round $1$ to $k$: $\Xi_{A(k)}(\I) = \Xi^*_{A(k)} \circ \Xi^*_{A(k-1)} \circ \dots \circ \Xi^*_{A(1)}(\I)$.
Accordingly, we have that $\Xi_A = \Xi_{A(R)}$, were $R$ is the last round of $A$ (as defined in Algorithm~\ref{alg:IterMemory_algo}).

Figure~\ref{fig:subdivisions} highlights the resulting protocol maps of the well-known full information protocols that use different shared memory primitives. We look for bounded memory protocol maps that yield the same output complex as the unbounded full information protocols $\textit{FI}_P$.
That is, given a simplicial input complex $\mathcal{I}$, and an algorithm $A \in \mathit{ITER}$ using a read write pattern $P$. We want that the protocol complex $\Xi_A(\mathcal{I})$ to be \emph{isomorphic} to the protocol complex yielded by $\textit{FI}_P$. Particularly, if $A$ uses immediate snapshot, we want that $\Xi_A(\mathcal{I}) \cong \xIIS(\mathcal{I}) \cong \mathrm{Ch}(\mathcal{I})$
\begin{figure}
    \centering
    \includegraphics[width=0.6\columnwidth]{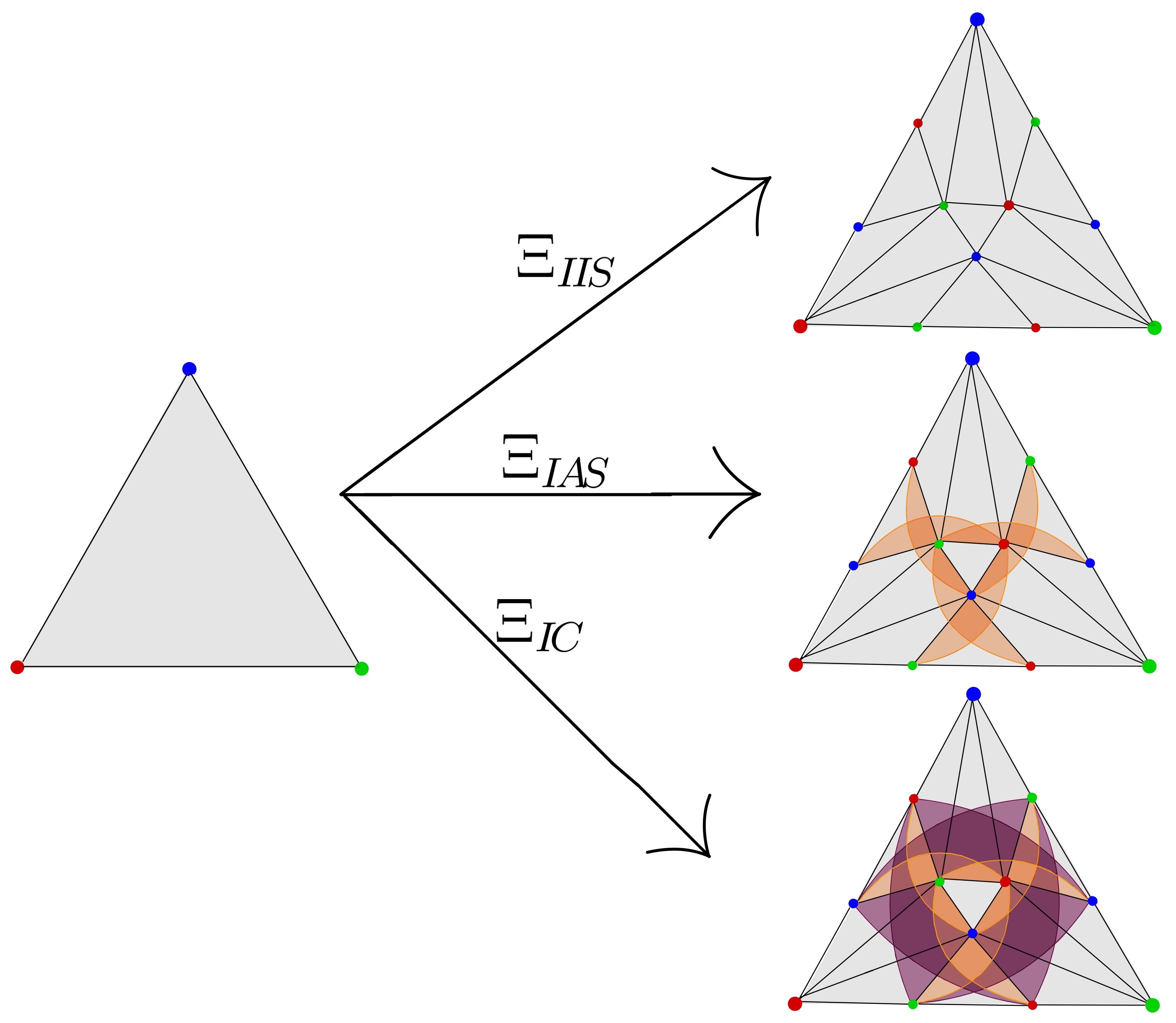}
    \caption{Application of the  \textit{immediate snapshot}, \textit{atomic snapshot} and \textit{iterated collect} protocol maps to $\Delta^2$ (shown on the right, from top to bottom). Starting from $\xIIS(\Delta^2)$,
    the new possible configurations in the resulting protocol complexes of $\Xi_{I\!A\!S}$ and $\Xi_{IC}$ are highlighted in orange and purple, respectively. Note that $\xIIS(\Delta^2) \subset \Xi_{I\!A\!S}(\Delta^2) \subset \Xi_{IC}(\Delta^2)$ and $\xIIS \cong \mathrm{Ch}$.}
    \label{fig:subdivisions}
\end{figure}

\section{A Necessary Condition for Full-Information Protocols}
\label{sec:lower_condition}

In this paper, we aim to identify a condition that an algorithm  $A \in \mathit{ITER}$
%
using a write-read pattern $P$
must satisfy in order to achieve an equivalence, up to isomorphism, with the (unbounded) full-information algorithm  $\textit{FI}_P$. 
More precisely, we derive a necessary condition on the protocol complex $\Xi_A(\I)$, given an input complex $\I$.

\subsection{Properties of $\textit{FI}_P$}

Our first objective is to establish some properties of the full-information protocol map of $\textit{FI}_P$. For example, we know that $\Xi_{\textit{FI}_{\textit{IIS}}}(\I) = \textit{Ch}$ as established in \cite{chromsubdivision}. However, this does not hold for all write-read patterns.

In Figure~\ref{fig:subdivisions}, we depict the protocol maps corresponding to one iteration of the iterated atomic snapshot ($\textit{FI}_{\textit{IAS}}$) and iterated collect ($\textit{FI}_{\textit{IC}}$) in $3$-process systems. 
These protocol complexes do not preserve planarity, implying they are not subdivisions of the input complex. 
Nevertheless, they share some key properties with subdivision operators.
Specifically, Lemma~\ref{lemma:FIP_prop_mesh} and Lemma~\ref{lemma:FIP_prop_cap} demonstrate that for any write-read pattern $P$, $\textit{FI}_P$ induces a mesh-shrinking protocol map~\cite{distCompTopo}. 
Intuitively, this means that in the $k$-th round of the protocol, the shortest path between any pair of vertices strictly increases with each previous round $k'$ such that  $k>k'$.

Therefore, any algorithm A $\in \mathit{ITER}$ that is equivalent, up to isomorphism, to a full-information protocol map must also satisfy the properties stated in Lemmas~\ref{lemma:FIP_prop_mesh} and~\ref{lemma:FIP_prop_cap}. 
These lemmas are applied in the proofs of Lemma~\ref{lemma:lateResultLowerBound} and Theorem~\ref{theorem:encodings}.

\begin{lemma} \label{lemma:FIP_prop_mesh}
    Let $\I$ be an input complex, $\textit{FI}_P$  an unbounded full information algorithm using the write-read pattern $P$ and $\Xi_{\textit{FI}_P}$ its protocol map.
    Then:
    \[
    \forall v,w \in V(\I), \{v,w\} \notin \Xi_{\textit{FI}_P}(\I)
    \]
    \end{lemma}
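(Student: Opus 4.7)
My plan is to argue by contradiction, exploiting the operational semantics of Algorithm~\ref{alg:IterMemory_algo} under the full-information encoding. Suppose $\{v,w\} \in \Xi_{\textit{FI}_P}(\I)$ for some $v, w \in V(\I)$. The case $\pi(v) = \pi(w)$ is immediate since $\{v,w\}$ then fails to be chromatic, so I assume $\pi(v) \neq \pi(w)$ and set $p = \pi(v)$, $q = \pi(w)$. By definition of the protocol complex, there must exist an execution $\alpha$ of $\textit{FI}_P$ on inputs drawn from $\I$ in which process $p$ terminates in the state labeled by the vertex $v$ and process $q$ terminates in the state labeled by $w$.

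Because both $\textit{encode}$ and $\textit{next\_state}$ are identities in $\textit{FI}_P$, the terminal state of a process coincides with its accumulated view across all $R$ rounds. For $p$'s terminal state to coincide with its bare input vertex $v$, every read that $p$ performs in every round must return $\bot$ on every register other than $p$'s own, and in particular on each $M[r, q]$. The symmetric conclusion must hold for $q$. Thus in the execution $\alpha$ neither process ever observes a write from the other.

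I then specialize to round~$1$ and isolate four events: the write $W_p$ of $p$ to $M[1, p]$, the read $R_p^q$ of $M[1, q]$ by $p$, and their mirror counterparts $W_q$ and $R_q^p$. Algorithm~\ref{alg:IterMemory_algo} enforces $W_p$ before $R_p^q$ and $W_q$ before $R_q^p$ within each process's local program order. The non-observation conclusion of the previous paragraph forces $R_p^q$ before $W_q$ (otherwise $p$ would have seen $q$'s value rather than $\bot$) and symmetrically $R_q^p$ before $W_p$. Chaining these four relations produces a cycle $W_p \prec R_p^q \prec W_q \prec R_q^p \prec W_p$ in the real-time order, which is impossible.

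The main obstacle is stating this cycle uniformly across the three concrete instantiations IC, IAS, and IIS, because the granularity of the read event $R_p^q$ differs. In iterated collect it is a specific atomic sub-read within the collect; in the atomic snapshot variant it coincides with the atomic snapshot event; and in the immediate snapshot variant I must rephrase the argument in terms of block ordering, using the fact that two processes in distinct IIS blocks are asymmetrically visible to one another while processes in the same block see each other mutually. In every case the same four ordering constraints survive and yield the same cyclic contradiction, from which the lemma follows.
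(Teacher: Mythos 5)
Your argument is correct and essentially reproduces the paper's own proof: both proceed by contradiction, use the write-then-read order guaranteed by Algorithm~\ref{alg:IterMemory_algo} together with the non-observation conditions to obtain two real-time orderings between the two writes that cannot both hold, yielding a cycle. The only addition in your version is the explicit handling of the $\pi(v)=\pi(w)$ degenerate case via chromaticity and the closing remarks reconciling the event granularity across IC, IAS, and IIS, both of which the paper leaves implicit but which do not change the substance of the argument.
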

    \begin{proof}
    By contradiction, suppose there exists a pair of vertices $v,w \in V(\I)$ such that the edge $ \{v,w\} \in \Xi_{\textit{FI}_P}(\I)$.
    That is, there is a global configuration in $\Xi_{\textit{FI}_P}(\I)$ where processes $\pi(v)$ and $\pi(w)$ read nothing but their own input.
    In particular, both processes did not see each other's inputs.
    By definition, $\textit{FI}_P \in \textit{ITER}$. Since it follows the pseudocode of Algorithm~\ref{alg:IterMemory_algo}, we have that all processes finish writing their input before reading the shared memory. 
    Thus, if $\pi(v)$ did not read $\pi(w)$'s input, the read operation of $\pi(v)$ started before $\pi(w)$ finished writing its input. 
    And the write operation of $\pi(v)$ finished before the write operation of $\pi(w)$.
    Conversely, as $\pi(w)$ also did not read $\pi(v)$'s input, the write operation of $\pi(w)$ finished before the write operation of $\pi(v)$, which is a contradiction.  
    $\qed$
    \end{proof}
    
    \begin{lemma} \label{lemma:FIP_prop_cap}
    Let $\I$ be an input complex and $\A,\B \subseteq \I$, $\textit{FI}_P$ an unbounded full information algorithm using the write-read pattern $P$ and $\Xi_{\textit{FI}_P}$ its protocol map:
    \[
        \Xi_{\textit{FI}_P}(\A) \cap \Xi_{\textit{FI}_P}(\B) = \Xi_{\textit{FI}_P}(\A \cap \B)
    \]
    \end{lemma}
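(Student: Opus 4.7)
The plan is to prove the two inclusions separately. The direction $\Xi_{\textit{FI}_P}(\A \cap \B) \subseteq \Xi_{\textit{FI}_P}(\A) \cap \Xi_{\textit{FI}_P}(\B)$ is immediate from the carrier map property: applying face-inclusion preservation of $\Xi_{\textit{FI}_P}$ to $\A \cap \B \subseteq \A$ and $\A \cap \B \subseteq \B$ yields both containments, hence their intersection.

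For the reverse inclusion, take $\gamma \in \Xi_{\textit{FI}_P}(\A) \cap \Xi_{\textit{FI}_P}(\B)$ and pick input simplices $\alpha \in \A$ and $\beta \in \B$ witnessing membership, so $\gamma \in \Xi_{\textit{FI}_P}(\alpha) \cap \Xi_{\textit{FI}_P}(\beta)$. My strategy is to extract from $\gamma$ a canonical input face $\gamma^\star$ and to show that $\gamma^\star \in \A \cap \B$ and $\gamma \in \Xi_{\textit{FI}_P}(\gamma^\star)$. The key property of $\textit{FI}_P$, as defined in Section~\ref{subsec:computational_moodel}, is that $\textit{encode}$ and $\textit{next\_state}$ are both the identity; hence each local state after round $r$ is a nested tuple that recursively records every view the process has ever read, with the original inputs appearing at the leaves. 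From the configuration $\gamma$ one can therefore read off the set $\gamma^\star$ of all $(p,i)$ pairs occurring at some leaf of some process's state in $\gamma$; since $\gamma$ comes from an actual execution, $\gamma^\star$ is a consistent assignment of inputs, and therefore a simplex of $\I$.

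Next I would argue $\gamma^\star \subseteq \alpha \cap \beta$. Any $(p,i) \in \gamma^\star$ appears as a leaf in some state of $\gamma$ only because, in the execution from $\alpha$ producing $\gamma$, process $p$ started with input $i$ and its writes were transitively propagated to some process of $\gamma$; this forces $(p,i) \in \alpha$, and the symmetric argument applied to the execution from $\beta$ gives $(p,i) \in \beta$. Since $\A$ and $\B$ are simplicial complexes, closure under taking faces yields $\gamma^\star \in \A$ and $\gamma^\star \in \B$, hence $\gamma^\star \in \A \cap \B$. To close the argument I would verify $\gamma \in \Xi_{\textit{FI}_P}(\gamma^\star)$ by restricting the original execution from $\alpha$ to the processes of $\gamma^\star$: by the very definition of $\gamma^\star$, the processes in $\alpha \setminus \gamma^\star$ never had their inputs propagated into the states observed in $\gamma$, so deleting them from the execution leaves $\gamma$ unchanged.

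I expect the main subtlety to be making the two informal operations, ``read off $\gamma^\star$ from $\gamma$'' and ``restrict the execution to $\gamma^\star$'', rigorous in the iterated model. Both rely on the recursive view-tree structure forced by the identity $\textit{encode}$ and $\textit{next\_state}$ functions of $\textit{FI}_P$, which uniquely determines, from each reachable configuration, the minimal participating input simplex, and ensures that an execution on that minimal simplex still reproduces the configuration.
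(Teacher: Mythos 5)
Your proof is correct and follows the same underlying idea as the paper's: identify the minimal set of inputs actually observed in the final configuration $\gamma$ (your $\gamma^\star$), argue that it must lie in both $\A$ and $\B$ and hence in $\A \cap \B$, and conclude $\gamma$ is reachable from it. The paper states this more tersely—in particular it does not spell out, as you do, the restriction argument verifying $\gamma \in \Xi_{\textit{FI}_P}(\gamma^\star)$—but the reasoning is essentially the same, and your added care in defining $\gamma^\star$ via the leaves of the full-information view trees is a sound elaboration rather than a different route.
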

    
    \begin{proof}
    
    Let $\sigma \in \Xi_{\textit{FI}_P}(\A) \cap \Xi_{\textit{FI}_P}(\B)$. By definition, $\sigma$ is a final view configuration of $|\sigma|$ processes that can be reached from both $\A$ and $\B$. 
    Since $\sigma \in \Xi_{\textit{FI}_P}(\A)$, the inputs read by processes in $\pi(\sigma)$ must have originated from some initial configuration in $\A$.
    Similarly, as $\sigma \in \Xi_{\textit{FI}_P}(\B)$, the same inputs must also be in $\B$. It follows that the initial input configuration of $\sigma$ is contained in $\A \cap \B$, implying that $\sigma \in \Xi_{\textit{FI}_P}(\A \cap \B)$. 
    Thus, $\Xi_{\textit{FI}_P}(\A) \cap \Xi_{\textit{FI}_P}(\B) \subseteq \Xi_{\textit{FI}_P}(\A \cap \B)$.

    Conversely, let $\sigma \in \Xi_{\textit{FI}_P}(\A \cap \B)$. This means $\sigma$ is a final configuration that results from an initial configuration in $\A \cap \B$.
    Since $\A \cap \B \subseteq \A $ and $ \A \cap \B \subseteq \B$, it follows that $\sigma \in  \Xi_{\textit{FI}_P}(\A)$ and $\sigma \in  \Xi_{\textit{FI}_P}(\B)$. 
    Thus, $\sigma \in  \Xi_{\textit{FI}_P}(\A) \cap \Xi_{\textit{FI}_P}(\B)$. $\qed$
    \end{proof}

\subsection{The Necessary Condition}

We now show which conditions $A \in \textit{ITER}$ must satisfy in order to have the same properties as the unbounded full information protocol $\textit{FI}_P$.
The key concept in achieving this equivalence is \textit{process distinguishability}. 
The goal is for a process to determine the state of another by reading the value it has written in shared memory. 
Importantly, it is not necessary for processes to write all of their knowledge into shared memory; a small, decodable, value may suffice to uniquely identify a state within the protocol complex. This allows us to define encoding functions that use a bounded number of shared-memory bits. 

The following definitions define the notion of distinguishability for subcomplexes.

\begin{definition}[Vertex distinguishability] \label{def:disting}
    Let $\K$ be a simplicial complex and $\omega : V(\I)\rightarrow E$. We say that $v \in V(\K)$ is \textit{distinguishable} in $\K$ \textit{under} $\omega$ if no adjacent vertex to $v$ has an adjacent vertex that uses the same encoding as $v$ and has the same color as $v$: 
    $ \forall u \in V(\K), \ \{v,u\} \in \K\implies \nexists x \in V(\K)\setminus\{v\} : \{u, x\} \in \K \land \omega(v) = \omega(x) \land \pi(x)=\pi(v)$. 
\end{definition}

\begin{definition}[Simplicial subcomplex distinguishability] \label{def:disting_simplex}
Let $\K$ be a simplicial complex, $\B \subseteq \K$ subcomplex, and $\omega : V(\K)\rightarrow E$. We say that $\B$ is distinguishable in $\K$ under $\omega$ if every vertex $v \in V(\B)$ is distinguishable in $\K$ under $\omega$.
\end{definition}

The primary challenge, and the key difference from~\cite{prequel}, is that algorithm $A$ can take an arbitrary number of rounds to obtain the same output complex as $r$ iterations of the unbounded full-information protocol. 
Figure~\ref{fig:iso-iter-diag} presents a commutative diagram that illustrates this concept. The bounded algorithm $A$ may require multiple rounds to achieve a protocol isomorphic to a single iteration of $\textit{FI}_P$.

\begin{figure}
\centering
\[\begin{tikzcd}
	{\Xi_{A(1)}(\mathcal{I})} & \dots & {\Xi_{A(k_1)}(\mathcal{I})} & \dots & {\Xi_{A(k_2)}(\mathcal{I})} & \dots & {\Xi_{A(k_r)}(\mathcal{I})} \\
	&&&&&&&& {\mathcal{O}} \\
	{\mathcal{I}} && {\Xi_{\textit{FI}_P} ( \mathcal{I})} && {\Xi_{\textit{FI}_P}^2 ( \mathcal{I})} & \dots & {\Xi_{\textit{FI}_P}^r(\mathcal{I})}
	\arrow[from=1-1, to=1-2]
	\arrow[from=1-2, to=1-3]
	\arrow[from=1-3, to=1-4]
	\arrow[from=1-4, to=1-5]
	\arrow[from=1-5, to=1-6]
	\arrow[dashed, tail reversed, from=1-5, to=3-5]
	\arrow[from=1-6, to=1-7]
	\arrow["\delta", maps to, from=1-7, to=2-9]
	\arrow[from=3-1, to=1-1]
	\arrow["{\Xi_{\textit{FI}_P}}"', from=3-1, to=3-3]
	\arrow[dashed, tail reversed, from=3-3, to=1-3]
	\arrow["{\Xi_{\textit{FI}_P}}"', from=3-3, to=3-5]
	\arrow["{\Xi_{\textit{FI}_P}}"', from=3-5, to=3-6]
	\arrow["{\Xi_{\textit{FI}_P}}"', from=3-6, to=3-7]
	\arrow[dashed, tail reversed, from=3-7, to=1-7]
	\arrow["\delta"', maps to, from=3-7, to=2-9]
\end{tikzcd}\]
\caption{Commutative diagram illustrating the equivalence between the unbounded full-information protocol map $\Xi_{\textit{FI}_P}$ and the bounded shared-memory iterated protocol map $\Xi_A$, using the write pattern $P$, over an input complex $\mathcal{I}$. $\Xi_A(k)$ denotes the associated protocol map at the k-th iteration of algorithm $A$. Dashed arrows indicate the existence of an isomorphism. The bounded algorithm $A$ is able to solve the same set of tasks as the unbounded full information protocol ($\delta \circ \Xi_A(\I) \cong \delta \circ \Xi_{\textit{FI}_P}(\I)$). }
\label{fig:iso-iter-diag}
\end{figure}
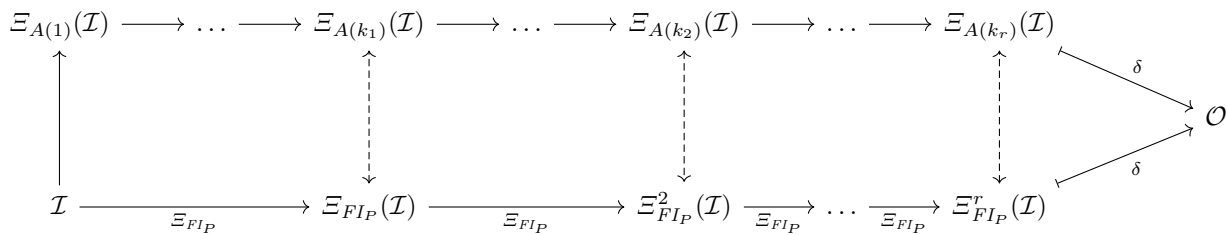

Theorem~\ref{theorem:encodings} gives a necessary condition for a bounded algorithm $A$ to be equivalent to its unbounded full information counterpart $\textit{FI}_P$. 
In a nutshell, each face of the input complex must be made distinguishable in $\I$ at some round of $A$.
To make a face $\sigma$ distinguishable in an input complex $\I$, the encoding function has to assign distinct values to each vertex in $\sigma$ so that, if the input configuration is $\sigma$, all processes can effectively map the observed values to the corresponding states in $\sigma$.
The encoding functions of $A$, $\omega_A$ must account for each global input configuration and convey the same information as in the full information protocol.
We consider each face $\sigma \in \text{Faces}(\I)$  as the subcomplex it induces in $\I$ when reasoning about distinguishability.


\begin{theorem} \label{theorem:encodings}
Let $\I$ be an input complex, $r>0$, $A \in \mathit{ITER}$ be an algorithm using the write-read pattern $P$, and $\{\omega_k\}_{k\leq r}$ the sequence of encoding functions of $A$.
Then:
\[
    \Xi_{A(r)}(\I) \cong \Xi_{\textit{FI}_P}(\I) \implies \forall \sigma \in \text{Faces}(\I)\ \exists k \leq r\ :\ \sigma\ \text{is distinguishable in}\ \I\ \text{under}\ \omega_{k}
\]
\end{theorem}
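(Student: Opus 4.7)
The plan is to argue by contrapositive: suppose some face $\sigma \in \text{Faces}(\I)$ is not distinguishable in $\I$ under $\omega_k$ for any $k \leq r$, and produce a violation of intersection-preservation in $\Xi_{A(r)}(\I)$. By Lemma~\ref{lemma:FIP_prop_cap}, $\Xi_{\textit{FI}_P}(\A) \cap \Xi_{\textit{FI}_P}(\B) = \Xi_{\textit{FI}_P}(\A\cap\B)$; since simplicial isomorphisms commute with intersection of subcomplexes, any failure of this identity on the $\Xi_{A(r)}$-side immediately breaks $\Xi_{A(r)}(\I) \cong \Xi_{\textit{FI}_P}(\I)$.

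From the assumption, for each $k \le r$ I extract a witness triple $(v_k, u_k, x_k)$ with $v_k \in V(\sigma)$, $\{v_k,u_k\}, \{u_k,x_k\} \in \I$, $x_k \ne v_k$, $\pi(x_k)=\pi(v_k)$, and $\omega_k(v_k)=\omega_k(x_k)$, reading $\omega_k$ on an input vertex $w$ as the encoding written at round $k$ by $\pi(w)$ in the solo execution from input $w$. I then aim to build two wait-free schedules of $A$ --- one on $\{v,u\}$ and one on $\{x,u\}$ for a carefully chosen pair $(v,x)$ of confusing inputs and a common observer $u$ --- in which $\pi(v)$ (resp.\ $\pi(x)$) runs all $r$ rounds solo before $\pi(u)$ starts. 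Under such a schedule, $\pi(u)$'s layer-by-layer reads in $\pi(v)$'s slot are $\omega_j(v)$ or $\omega_j(x)$, so if these coincide for every $j$, $\pi(u)$ reaches the same final state in both executions. The resulting common vertex of colour $\pi(u)$ lies in $\Xi_{A(r)}(\{v,u\}) \cap \Xi_{A(r)}(\{x,u\})$ but not in $\Xi_{A(r)}(\{u\})$: a solo $\pi(u)$ run sees $\bot$ in that slot at every layer, whereas in the above schedules it reads a non-$\bot$ value. Applying Lemma~\ref{lemma:FIP_prop_cap} on the $\textit{FI}_P$ side with $\{v,u\} \cap \{x,u\} = \{u\}$ then delivers the contradiction. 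Lemma~\ref{lemma:FIP_prop_mesh} is needed in parallel to rule out a degenerate alternative where $\pi(u)$ and $\pi(v)$ remain in pure input states in the same simplex.

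The main obstacle is the aggregation step: the per-round witnesses $(v_k,u_k,x_k)$ need not share $v_k$, $u_k$, or $x_k$, whereas the indistinguishability argument above requires a \emph{single} pair $(v,x)$ matching on $\omega_j$ for every layer $j$. I would address this by fixing a vertex $v \in V(\sigma)$ and a neighbour $u$ in $\I$ with the property that for every $k \le r$ there exists some $x$ (depending on $k$) with $\{u,x\} \in \I$, $\pi(x) = \pi(v)$, $\omega_k(v) = \omega_k(x)$, $x \ne v$; otherwise $v$ would be distinguishable at that round in $\I$ under $\omega_k$, contradicting the hypothesis on $\sigma$. The remaining step --- collapsing the round-indexed family $\{x_k\}$ into a single $x$ that works at every layer --- is the most delicate point, and I expect to resolve it either by a pigeonhole argument over the finite vertex set of the $1$-neighbourhood of $u$ in $\I$, or by iterating the construction round by round, at each stage replacing one execution by another differing only in the round-$k$ view while preserving the $\pi(u)$-state reached, so that the intersection discrepancy is built up incrementally across rounds.
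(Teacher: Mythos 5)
Your high-level plan is aligned with the paper's: both argue by contradiction/contrapositive via the intersection-preservation property (Lemma~\ref{lemma:FIP_prop_cap}) of $\Xi_{\textit{FI}_P}$, both use Lemma~\ref{lemma:FIP_prop_mesh} to rule out the degenerate ``nobody saw anybody'' case, and both exhibit a vertex lying in $\Xi_{A}(\alpha)\cap\Xi_{A}(\beta)$ but not in $\Xi_{A}(\alpha\cap\beta)$ via an execution in which the observer $\pi(u)$ confuses two inputs of colour $\pi(v)$.

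The genuine gap is in your choice of schedule, and it is what creates the aggregation problem you flag as ``the most delicate point.'' You let $\pi(v)$ run all $r$ rounds solo before $\pi(u)$ starts, so that $\pi(u)$'s reads in $\pi(v)$'s slot are $\omega_1(v),\dots,\omega_r(v)$. That forces you to find a single $x$ with $\omega_j(v)=\omega_j(x)$ for \emph{every} $j\leq r$, and the hypothesis of the theorem does not give you this: the per-round witnesses $x_k$ are genuinely allowed to differ across rounds, and neither pigeonhole over the finite $1$-neighbourhood nor the unspecified ``build up the discrepancy incrementally'' obviously closes the gap (one can concoct $\omega$-families in which each $x$ is a witness at exactly one round, so that the $r$-round solo trace $(\omega_1(v),\dots,\omega_r(v))$ is in fact unique). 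The paper sidesteps aggregation entirely by using a less greedy schedule: $\pi(v)$ proceeds only up to the write in layer $k$ and is then suspended, while $\pi(u)$ reads layers $1,\dots,k-1$ \emph{before} $\pi(v)$ writes to them and layers $k{+}1,\dots,r$ after $\pi(v)$ has stopped. Then $\pi(u)$'s history in $\pi(v)$'s slot is $(\bot,\dots,\bot,\omega_k(\cdot),\bot,\dots,\bot)$, so the only equality needed is $\omega_k(v)=\omega_k(x_k)$ at the single round $k$, which is exactly what a round-$k$ indistinguishability witness supplies. Since the same one-round schedule is replayable from the input containing $x_k$, the observer's state lands in $\Xi_{A}$ of both facets but not in $\Xi_A$ of their intersection (which has no vertex of colour $\pi(v)$), contradicting the intersection-preservation that isomorphism with $\Xi_{\textit{FI}_P}$ forces. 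With that change of schedule your plan goes through without the aggregation step; as written, it does not.
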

\begin{proof}
By contradiction, suppose that there exists a face $\sigma \in \I$ which is not distinguishable in $\I$ under any encoding function in $\{\omega_k\}_{k\leq r}$.
As $\Xi_{A(r)}(\I) \cong \Xi_{\textit{FI}_P}(\I)$, by Lemma~\ref{lemma:FIP_prop_cap} we have that for any subcomplex $\A\subseteq\I$, $\Xi_{A(k)}(\sigma \cap \A) = \Xi_{A(k)}(\sigma) \cap \Xi_{A(k)}(\A)$. 
We want to show that this requires that $\sigma$ is distinguishable in $\I$ under some $\omega_k \in \{\omega_k\}_{k\leq r}$.

Consider the case where $\I$ is a simplicial complex made of two simplices $\alpha$ and $\beta$ of dimension $n$ such that $\alpha \cap \beta = \sigma$ and $dim(\sigma)=n-1$. Note that both $\alpha$ and $\beta$ have a vertex which is not in $\sigma$. Let $a$ and $b$ be such vertices respectively and $p$ the respective process associated to both vertices: $p=\pi(a)=\pi(b)$.
Let $p$ write the same value for both states, that is: $\omega_k(a) = \omega_k(b)$. As $a$ and $b$ are adjacent to any vertex in $\sigma$, then $\alpha$ is not distinguishable in $\I$ under $\omega_k$. Analogously, $\beta$ is neither distinguishable in $\I$ under $\omega_k$.
By contradiction, suppose $\Xi_{A(k)}(\sigma) = \Xi_{A(k)}(\alpha) \cap \Xi_{A(k)}(\beta)$. Consider the execution where the global input configuration is $\alpha$ and a process $p'\neq p$ reads the value written by $p$ and $p$ reads nothing but its own input. Let $c_{p'}$ be the resulting final state of process $p'$. Because $\omega_k(a) = \omega_k(b)$, from the perspective of $p'$, the execution is \emph{indistinguishable} from the execution where process $p$ is in state $b$. 
Thus, the vertex $c_{p'}$ is both in $\Xi_{A(k)}(\alpha)$ and $\Xi_{A(k)}(\beta)$.
However, $a \notin \sigma$ and $b \notin \sigma$. Thus, $c_{p'} \notin \Xi_A(\sigma)$, a contradiction.

Therefore, there exists a $\omega_k \in \{\omega_k\}_{k\leq r}$ such that $\sigma$ is distinguishable in $\I$ under $\omega_k$, which contradicts the initial hypothesis. $\qed$

\end{proof}

Theorem~\ref{theorem:encodings} demonstrates that in an iterative protocol, each facet is made distinguishable at some round $k$ and remains so until the final round $R$. This is a defining property of the iterative protocol maps $\Xi_A$: they must subdivide the facets of the input complex at some step. Indeed, making a facet distinguishable is necessary for the protocol to make progress.

In~\cite{prequel}, a detailed analysis of input complex indistinguishability is provided. 
The authors introduce the concept of the \emph{indistinguishability graph} for an input complex and frame the challenge of making the complex distinguishable as a vertex coloring problem.
In this approach, a single encoding function is used to distinguish all faces of the input complex.
In our model, however, algorithm $A$ is equipped with a protocol map $\Xi_{A(k)}$ at each round, so rather than a single encoding function $\omega$ for the entire complex, we have a sequence of encoding functions $\{\omega_k\}_{k\leq R}$, each applied sequentially at each intermediate protocol complex.
Additionally, the protocol map preserves already distinguished faces.
Specifically, we seek a round $k$ such that $\Xi_{A(k)}(\I)$ distinguishes a face $\sigma$, and the output configurations persist through the subsequent protocol maps of $A$ until the last round.

To render a face distinguishable, the encoding function must assign unique values to some vertices in $\I$, and consequently, writing these values to shared memory requires a certain number of bits. For example, a single-bit encoding function can only toggle between writing 0 or 1, with memory initialized to 0 (or $\bot$), thereby allowing only only one (or at most two) distinct encodings. 

Formally, let $b$ denote the number of bits available in shared memory for each process entry per round in $A$. The number of possible encodings is then limited by this bit count: $|\Ima \omega_{A(k)}| \leq 2^b - 1$.

Thus, the task of designing a bounded iterative protocol $A$ that matches the unbounded full-information protocol’s behavior becomes a problem of identifying suitable encoding functions $\omega_{A(k)}$, constrained by $|\Ima \omega_{A(k)}| \leq 2^b - 1$, so that $A$ can distinguish all facets in $\I$ within as few rounds as possible. 
However, achieving this optimally is non-trivial: in fact, the problem is NP-hard~\cite{Karp-NP}.

First, we define the \textit{distinguishable simplicial subcomplex} of an encoding function $\omega$ as the subset of faces of an input complex $\I$ that are \textit{distinguishable} under $\omega$. 
Then, we can denote the distinguishable simplicial subcomplex of a sequence of encoding functions as follows:

\begin{definition} \label{def:disting_subcomplex}
Let $\omega_A$ be a sequence of encoding functions $\{\omega_{A(k)}\}_{0< k \leq R}$ and $\I$ an input complex. 
We define $\D(\omega_A,\I)$, the \emph{distinguishable subcomplex of $\I$ under $\omega_A$}, as follows:

$$
\D(\omega_A,\I) = \bigcup_{0 \leq k < R}{\{ \sigma \in \text{Faces}(\I) : \sigma\ \text{is distinguishable in}\ \I\ \text{under}\ \omega_{A(k)}\}}
$$
\end{definition}

Observe that $\D(\omega_A,\I)$ is a well-defined simplicial subcomplex, as if a face $\sigma \in \D(\omega_A,\I)$ is distinguishable under some $\omega_{A(k)}$, all its faces $\sigma' \subseteq \sigma $ are also distinguishable under the same encoding function. 
Using this definition we can restate Theorem~\ref{theorem:encodings} as follows: 

\begin{corollary} \label{corollary:distset}
Let $\I$ be an input complex, $A \in \mathit{ITER}$ using write-read pattern $P$, $\textit{FI}_P$ the unbounded full-information algorithm, $\omega_A$ the corresponding sequence of encoding functions of $A$ and $\D(\omega_A,\I)$ its distinguishable subcomplex. 
$$
\Xi_{A}(\I) \cong \Xi_{\textit{FI}_P}(\I) \Rightarrow \D(\omega_A,\I) = \I
$$
\end{corollary}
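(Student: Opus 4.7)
The plan is to view this corollary as essentially a repackaging of Theorem~\ref{theorem:encodings} in the language of Definition~\ref{def:disting_subcomplex}, so the argument will be short and consist mostly of unfolding definitions. Assume $\Xi_A(\I) \cong \Xi_{\textit{FI}_P}(\I)$. Recall that $\Xi_A = \Xi_{A(R)}$, where $R$ is the last round of $A$, and that the sequence $\omega_A = \{\omega_{A(k)}\}_{0 < k \leq R}$ is exactly the sequence of encoding functions used by $A$ from round $1$ through round $R$.

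First, I would apply Theorem~\ref{theorem:encodings} with $r = R$. The hypothesis $\Xi_{A(R)}(\I) \cong \Xi_{\textit{FI}_P}(\I)$ is met, so the theorem yields that for every face $\sigma \in \text{Faces}(\I)$ there exists some $k \leq R$ such that $\sigma$ is distinguishable in $\I$ under $\omega_{A(k)}$. By Definition~\ref{def:disting_subcomplex}, this is precisely the statement that $\sigma \in \D(\omega_A,\I)$. Hence $\text{Faces}(\I) \subseteq \D(\omega_A,\I)$.

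For the reverse inclusion, note that by construction in Definition~\ref{def:disting_subcomplex}, $\D(\omega_A,\I)$ is formed from faces of $\I$, so $\D(\omega_A,\I) \subseteq \I$. Combining the two inclusions gives $\D(\omega_A,\I) = \I$ as simplicial complexes. The only minor point worth checking explicitly is that $\D(\omega_A,\I)$ is indeed a simplicial complex, but this is already observed immediately after Definition~\ref{def:disting_subcomplex}: distinguishability is inherited by sub-faces under a fixed encoding, so downward closure is automatic.

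There is no real obstacle here; the content of the corollary is entirely carried by Theorem~\ref{theorem:encodings}, and its value lies in rephrasing the quantifier ``$\forall \sigma\ \exists k$'' as a single statement about the combined distinguishable subcomplex. Consequently, the proof is simply: invoke Theorem~\ref{theorem:encodings}, translate via Definition~\ref{def:disting_subcomplex}, and conclude by double inclusion.
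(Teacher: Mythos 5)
Your proof is correct and matches the paper's approach: the paper itself introduces the corollary with the phrase ``we can restate Theorem~\ref{theorem:encodings} as follows'' and gives no further argument, treating it exactly as the definitional unfolding you carry out. Your instantiation with $r=R$, the translation via Definition~\ref{def:disting_subcomplex}, and the trivial reverse inclusion are all sound and complete the restatement cleanly.
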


With these definitions and results, we can establish the computational hardness of finding a minimal encoding sequence, such that $\D(\omega_A,\I) = \I$ and algorithm $A$ takes as less rounds as possible for an arbitrary $\I$ to $\Xi_{A}(\I) \cong \Xi_{\textit{FI}_P}(\I)$.

\begin{lemma}\label{NP}
Let $b>0$, $\I$ be an input complex, $A \in \textit{ITER}$ taking $R$ rounds and $\{\omega_{A(k)}\}_{0< k \leq R}$ its corresponding sequence of encoding functions, where $|\Ima \omega_{A(k)}|\leq 2^b -1\ \forall 0 < k\leq R$ and $\D(\omega_A,\I) = \I$. 
Finding a sequence $\omega_A$ such that there is no shorter sequence satisfying the former conditions is NP-Hard.
\end{lemma}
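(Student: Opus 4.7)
The plan is to reduce graph $k$-coloring, a classical NP-hard problem~\cite{Karp-NP}, to the minimization problem stated in the lemma. The key observation is that even the restricted instance of asking whether a single round ($R=1$) suffices already captures graph coloring, so optimizing $R$ is at least as hard.

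First, I would translate Definition~\ref{def:disting} into graph-theoretic language. To an input complex $\I$ associate an \emph{indistinguishability graph} $H(\I)$, in the sense of~\cite{prequel}, whose vertices are $V(\I)$ and in which $v$ and $x$ are adjacent whenever $\pi(v)=\pi(x)$ and there exists $u \in V(\I)$ with $\{u,v\},\{u,x\}\in \I$. Reading Definition~\ref{def:disting} directly gives that $\D(\{\omega\},\I)=\I$ for a single encoding function $\omega$ if and only if $\omega$ is a proper vertex coloring of $H(\I)$. In particular, a single-round encoding sequence with $b$ bits per entry exists exactly when $H(\I)$ admits a proper coloring with at most $2^b - 1$ colors.

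Second, I would give a polynomial-time reduction. Given a graph $G=(V_G,E_G)$ and an integer $k \geq 3$, set $b = \lceil \log_2(k+1)\rceil$ and construct $\I_G$ as follows: place all vertices of $V_G$ under a single fixed process color; for every edge $e=\{v,x\} \in E_G$ introduce an auxiliary vertex $u_e$ of a fresh distinct color together with the $1$-simplices $\{v,u_e\}$ and $\{x,u_e\}$. Each $u_e$ is the unique vertex of its color and is therefore trivially distinguishable under every encoding, while the $H$-edges among the $V_G$-vertices coincide exactly with $E_G$. Consequently, the minimum-length encoding sequence for $(\I_G,b)$ has length $1$ if and only if $G$ is $k$-colorable, so any algorithm producing a shortest valid sequence would decide $k$-colorability of $G$.

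The main obstacle I anticipate is making the embedding step airtight. One must check that $\I_G$ is a valid chromatic simplicial complex within the $\mathit{ITER}$ model and that the auxiliary vertices do not accidentally create $H$-edges on $V_G$ beyond those coming from $E_G$, nor do they make the problem easier by forcing $R \geq 2$ for unrelated reasons. Allocating one fresh color per auxiliary vertex achieves cleanness at the price of a polynomial blow-up in the number of processes; a more parsimonious variant reusing colors across non-adjacent edges should also work and would give the same conclusion. Once the embedding is verified, NP-hardness of Karp's $k$-coloring lifts directly to the minimum-length-sequence problem.
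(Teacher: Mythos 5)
Your reduction is correct in spirit but takes a genuinely different route from the paper: you reduce from graph $k$-colorability via the indistinguishability graph $H(\I)$, whereas the paper reduces from Set Cover by building a complex whose facets correspond to ground-set elements and whose shared vertices encode forbidden groupings. Your route has the advantage of being cleaner and more self-contained: the equivalence ``$\D(\{\omega\},\I)=\I$ iff $\omega$ is a proper coloring of $H(\I)$'' falls out of Definition~\ref{def:disting} almost verbatim, and the edge-gadget construction makes the $H$-edges on $V_G$ exactly the edges of $G$, so there is no risk of stray adjacencies. The paper's Set Cover route more directly matches the flavor of the problem (covering faces of $\I$ by distinguishable stars), which makes it a slightly better fit for the subsequent Greedy Star analysis, but your coloring-based argument is no less valid as a hardness proof.

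One genuine flaw to fix: setting $b=\lceil\log_2(k+1)\rceil$ only gives $2^b-1\geq k$, not $2^b-1=k$. When the inequality is strict (e.g.\ $k=4$ gives $b=3$, $2^b-1=7$), the one-round criterion you derived is ``$H(\I_G)$ is $(2^b-1)$-colorable'', which is a strictly weaker condition than $G$ being $k$-colorable, so the claimed equivalence ``minimum length is $1$ iff $G$ is $k$-colorable'' breaks. The fix is trivial -- either restrict to $k=3$ and $b=2$ (three-colorability is already NP-hard), or more generally reduce from $(2^b-1)$-colorability for any fixed $b\geq 2$ -- but as written the reduction is not sound for all $k\geq 3$. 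A second, smaller point worth making explicit: since all of $V_G$ shares a single process color while each $u_e$ has a fresh color, and encodings are per-process-register values, the auxiliary vertices can all reuse, say, encoding $1$ without inflating $|\Ima\,\omega|$ beyond what the $V_G$-coloring consumes; stating this closes the loop on the constraint $|\Ima\,\omega_{A(k)}|\leq 2^b-1$.
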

\begin{proof}
We will prove this by showing that the problem can be used to solve an arbitrary instance of the Set Cover problem, therefore it has to be as hard as solving Set Cover~\cite{Karp-NP}.

Consider the family of encodings which their images is either $1$ or $\bot$. Let $A(\mathcal{I})$ be an algorithm that solves the search problem of the lemma, returning a valid cover with the respective encodings used.
Let $U$, and $S\subseteq
2^U$ be a set and a collection of subsets, respectively.
We build the complex $\mathcal{I}$ as follows: for each $u \in U$, define a facet $\Delta^{|U|}$ such that for each $s\in \overline{S}$ (complement of $S$), the facets with labels in $s$ share a common vertex. As we have $|U|$ facets, and each simplex could be attached to $|U|$ different simplices, $\mathcal{I}$ is well-defined.

Now $A(\mathcal{I})$ will return a minimal cover such that the facets of each subset will not share a vertex (as there couldn't be distinguishable by encodings writing only 1). Thus, the labels of the facets of each subset are in $S$. And each facet corresponds to an element $u \in U$. It follows that $A(\mathcal{I})$ solves the optimization version of Set Cover. $\qed$
\end{proof}

Given the memory constraint, obtaining the optimal cover is a hard task.
Consequently, our objective is to find the asymptotic bit complexity of the bounded algorithm $A$ with respect to its number of rounds, the desired $r$ iterations of the full-information protocol, and number of processes in the system $n$.

\subsection{Combinatorial Characterization of the Necessary Condition}

We now link the necessary condition of Lemma~\ref{lemma:lateResultLowerBound} to a combinatorial property of the input complex and subsequently present a result on its asymptotic growth with respect to the number of full-information rounds and available bits of shared-memory.

\begin{lemma} \label{lemma:min_rounds}
Let $\I$ be an input complex, and let $A \in \mathit{ITER}$ be an algorithm using the write-read pattern $P$ such that $\Xi_{A}(\I) \cong \Xi_{\textit{FI}_P}(\I)$.
Let $b$ denote the number of bits per entry in the iterated memory of $A$.
Then the number of iterations required by $A$ -- that is, the length of its sequence of encoding functions $\omega_A$ -- satisfies the following lower bound:

\[
|\omega_A| \geq 
\Bigg\lceil \frac{\max_{v \in V(\I)}
{\mathrm{deg}(\I,v)}}
{n (2^b - 1)}
 \Bigg\rceil
\]

\end{lemma}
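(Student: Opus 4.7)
The plan is to apply Corollary~\ref{corollary:distset} to the edges incident to a maximum-degree vertex of $\I$, then run a pigeonhole argument on (color, encoding)-pairs round by round.

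\textbf{Step 1 (Reduction to edges incident to a max-degree vertex).} Fix $v \in V(\I)$ realizing $d := \max_{v \in V(\I)} \mathrm{deg}(\I,v)$, and let $u_1, \ldots, u_d$ be its neighbors. Since $\Xi_A(\I) \cong \Xi_{\textit{FI}_P}(\I)$, Corollary~\ref{corollary:distset} gives $\D(\omega_A, \I) = \I$; in particular each edge $\{v, u_i\}$ is distinguishable in $\I$ under some $\omega_{k_i}$ with $k_i \leq R := |\omega_A|$. Assign each edge to one such round and set $E_k := \{u_i : k_i = k\}$, so that $\sum_{k=1}^{R} |E_k| = d$.

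\textbf{Step 2 (Per-round bound via $(\pi,\omega_k)$-injectivity).} Fix a round $k$ and pick $u, u' \in E_k$ with $u \neq u'$ and $\pi(u) = \pi(u')$. Since $\{v, u\}$ is distinguishable under $\omega_k$, Definition~\ref{def:disting_simplex} forces $u$ to be distinguishable under $\omega_k$. Applying Definition~\ref{def:disting} at $u$, with $v$ playing the role of the adjacent vertex and $u'$ the candidate witness (note that $\{v, u'\} \in \I$, $u' \neq u$, and $\pi(u') = \pi(u)$), precludes $\omega_k(u') = \omega_k(u)$. Hence the map $u \mapsto (\pi(u), \omega_k(u))$ is injective on $E_k$; as a neighbor of $v$ carries one of at most $n$ colors and $|\Ima \omega_k| \leq 2^b - 1$ encodings, $|E_k| \leq n(2^b - 1)$.

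\textbf{Step 3 (Sum over rounds).} Combining Steps 1 and 2 yields $d = \sum_{k=1}^R |E_k| \leq R \cdot n(2^b - 1)$, so $R \geq d / (n(2^b - 1))$, and since $R$ is an integer this gives the claimed ceiling bound.

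The main obstacle will be Step 2: one must invoke the right direction of Definition~\ref{def:disting}, namely that distinguishability of $u$ (not of $v$) is what forbids another neighbor of $v$ of color $\pi(u)$ from sharing the encoding $\omega_k(u)$, thereby supplying the pigeonhole witness. The remaining steps are essentially bookkeeping.
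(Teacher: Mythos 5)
Your proof is correct and follows essentially the same approach as the paper: fix a maximum-degree vertex $v$, apply Corollary~\ref{corollary:distset} so that every edge incident to $v$ is distinguishable at some round, and bound each round's distinguishing capacity by $n(2^b-1)$ via a $(\pi,\omega_k)$-pigeonhole on the definition of distinguishability. The paper phrases the per-round count by isolating the color carrying the most $v$-neighbors (at least $\deg(\I,v)/n$ by averaging) and dividing by $2^b-1$, whereas you sum $|E_k| \le n(2^b-1)$ across rounds — the same accounting reached a slightly tidier way, with the use of Definitions~\ref{def:disting} and~\ref{def:disting_simplex} made explicit.
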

\begin{proof} 
Let $v \in V(\I)$ be a vertex of maximum degree $D$ in the input complex $\I$, corresponding to some process $p$.
For process $p$ in state $v$ to distinguish each of the $D$ states adjacent to $v$ in $\I$, each such state must be assigned a different encoding value by the encoding functions in $\omega_{A}$.
Since each encoding function $\omega_{A(k)}$ is constrained by the $b$ bits available in shared memory, it can assign at most $2^b - 1$ distinct values to the vertices of a given process, thereby distinguishing at most $2^b - 1$ states.

To achieve this, each $\omega_{A(k)}$ can assign a unique value to at most $2^b-1$ of the vertices adjacent to $v$, and assign $\bot$ to all others. 
Therefore, at least $\lceil \frac{D}{2^b-1} \rceil$ different encoding functions are required to make the distinguishable subcomplex $\D(\omega_A,\I)=\I$, a necessary condition for $\Xi_{A(r)}(\I) \cong \Xi(\I)$ by Corollary~\ref{corollary:distset}.
Finally, the number of vertices of color $p$ adjacent to $v$ have to be as big as counting all the vertices of all colors divided by $n$, which is the case when $\mathrm{Lk}(\I,v)$ has the same number of vertices of every color. $\qed$
\end{proof}

Lemma~\ref{lemma:min_rounds} gives a topological condition on the number of rounds required by $A$ to yield a full-information protocol. 
It is of interest understanding how the value grows asymptotically. To achieve this, we will take advantage of the results and tools provided in~\cite{prequel}. 
We recommend the reader to check~\cite{prequel} if it is interested in the tools used, such as the $\mathrm{f}$-vector, a common tool in polyhedral combinatorics~\cite{fvector}, to perform the asymptotic analysis. 

First, we show that the degree of a vertex in the input complex, after $r$ iterations of the full-information protocol map, will be bigger than applying the standard chromatic subdivision $r$ times.  

\begin{lemma} \label{lemma:lateResultLowerBound}
Let $\I$ be an input complex, $\textit{FI}_P$ the unbounded full information protocol, $v \in V(\mathcal{A})$, $r > 0$, and $\mathrm{Ch}$ the standard chromatic subdivision operator. Then the following inequality holds:

\[
{\mathrm{deg}(\Xi_{\textit{FI}_P}^r(\I),v)} \geq
{\mathrm{deg}(\mathrm{Ch}^r(\I),v)}
\]
\end{lemma}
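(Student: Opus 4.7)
The plan is to show the stronger statement that $\mathrm{Ch}^{r}(\I)$ embeds as a subcomplex of $\Xi_{\textit{FI}_P}^{r}(\I)$, from which the degree inequality follows immediately: every vertex adjacent to $v$ in the former is adjacent to $v$ in the latter. Since $\xIIS\cong\mathrm{Ch}$ by the result recalled from~\cite{chromsubdivision}, this reduces to proving that for every input complex $\K$ and every write-read pattern $P$ supported by $\mathit{ITER}$, $\xIIS(\K)\subseteq\Xi_{\textit{FI}_P}(\K)$.

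The first step is precisely this base inclusion. I would argue it by simulation of schedules: any immediate-snapshot execution proceeds by a sequence of ``concurrent blocks'' $B_{1},\ldots,B_{k}$, in which the processes of $B_{i}$ jointly perform their write/read before the processes of $B_{i+1}$ begin. Such a schedule is also a legal execution of the atomic snapshot pattern (order the writes within each block arbitrarily, then let every process of the block perform its snapshot, which by atomicity sees exactly the writes of $B_{1}\cup\cdots\cup B_{i}$), and of the iterated collect pattern (same write ordering; each process of $B_{i}$ performs its individual reads before the writes of $B_{i+1}$ begin). The final view observed by each process is identical in the three patterns, so every simplex of $\xIIS(\K)$ corresponds to a reachable configuration of $\Xi_{\textit{FI}_P}(\K)$, yielding $\xIIS(\K)\subseteq\Xi_{\textit{FI}_P}(\K)$.

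The second step lifts the base inclusion to $r$ iterations. Using that chromatic subdivision is a carrier map, hence monotone under subcomplex inclusion ($\A\subseteq\B\Rightarrow\mathrm{Ch}(\A)\subseteq\mathrm{Ch}(\B)$), and that $\Xi_{\textit{FI}_P}$ is itself a carrier map (so likewise monotone), I argue by induction on $r$. The base case $r=1$ is Step~1. For the inductive step, apply Step~1 with $\K\coloneqq\Xi_{\textit{FI}_P}^{r-1}(\I)$ to obtain $\mathrm{Ch}(\Xi_{\textit{FI}_P}^{r-1}(\I))\subseteq\Xi_{\textit{FI}_P}^{r}(\I)$, and apply $\mathrm{Ch}$ to the inductive hypothesis $\mathrm{Ch}^{r-1}(\I)\subseteq\Xi_{\textit{FI}_P}^{r-1}(\I)$ to obtain $\mathrm{Ch}^{r}(\I)\subseteq\mathrm{Ch}(\Xi_{\textit{FI}_P}^{r-1}(\I))$. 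Chaining the two inclusions yields $\mathrm{Ch}^{r}(\I)\subseteq\Xi_{\textit{FI}_P}^{r}(\I)$.

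The degree inequality follows at once: $v$ is a vertex of $\mathrm{Ch}^{r}(\I)$, and every $1$-face $\{v,u\}\in\mathrm{Ch}^{r}(\I)$ also lies in $\Xi_{\textit{FI}_P}^{r}(\I)$, so $\mathrm{deg}(\Xi_{\textit{FI}_P}^{r}(\I),v)\ge\mathrm{deg}(\mathrm{Ch}^{r}(\I),v)$. I expect the main obstacle to be the precise formalization of Step~1 -- i.e., the scheduling simulation from immediate snapshots into atomic snapshots and collects -- since the argument is standard folklore but must be stated uniformly across all patterns $P$ in $\mathit{ITER}$; everything else is a short monotonicity induction using carrier-map properties already recorded in Section~\ref{sec:Preliminaries}.
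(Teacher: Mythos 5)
Your proposal is correct, and it takes a genuinely different route from the paper's. The paper also proceeds by induction on $r$, but it argues directly about degree counts: at each step it identifies the new vertices adjacent to $v$ produced by one application of the protocol map, exhibits the corresponding full-information executions (process $p=\pi(v)$ finishing all its steps before the others), and invokes Lemma~\ref{lemma:FIP_prop_mesh} to ensure that none of the previously-counted edges survives, so the per-round degree increments can be compared against those of $\mathrm{Ch}$. You instead prove a \emph{stronger} structural statement — a subcomplex embedding $\mathrm{Ch}^{r}(\I)\hookrightarrow\Xi_{\textit{FI}_P}^{r}(\I)$ — from which the degree inequality is an immediate corollary. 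The block-schedule simulation in your Step~1 is the correct formalization of the same execution that the paper exhibits only locally (at $v$), and your inductive step cleanly replaces the paper's counting-plus-mesh-shrinking bookkeeping with monotonicity of $\mathrm{Ch}$ and of $\Xi_{\textit{FI}_P}$. Your version is more modular: the only model-dependent work is the one-round inclusion $\xIIS(\K)\subseteq\Xi_{\textit{FI}_P}(\K)$, while the lifting to $r$ rounds is pure carrier-map algebra; the paper's approach avoids stating that inclusion explicitly but has to re-argue the execution existence and re-invoke Lemma~\ref{lemma:FIP_prop_mesh} at every level of the induction. The one caveat you already flag yourself is real but mild: $\mathit{ITER}$ is defined semi-formally through Algorithm~\ref{alg:IterMemory_algo} and three named patterns, so "for every pattern $P$ supported by $\mathit{ITER}$" must in practice mean the IC/IAS/IIS cases you enumerate; the paper's own proof leans on the same informal appeal to "such a configuration exists in any algorithm in $\mathit{ITER}$", so you are not on weaker footing here.
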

\begin{proof}
We will prove the statement by induction over $r$. 
Let $p$ be a process such that $\pi(v)=p$.
First, note that $\Xi_{\textit{FI}_P}$ is a full information protocol, thus for each face $\sigma$ such that $dim(\sigma)=n$ and $v \in \sigma$, the protocol map yields $(n-1)$ central vertices in $\sigma$. 
Moreover, there is an edge ($1$-face) connecting vertex $v$ and each of the new vertices in $\mathrm{Ch}(\I)$ of the faces including $v$, as it represents the configuration where another process $p' \neq p$ reads the input value of $p$ and $p$ only reads its own input. 
Such configuration does exist in any algorithm $\I \in \mathit{ITER}$ when process $p$ executes all its steps before all the other processes. 
Moreover, by Lemma~\ref{lemma:FIP_prop_mesh}, any edge adjacent to $v$ in $\I$ does not appear in $\Xi_{\textit{FI}_P}(\I)$ nor in $\mathrm{Ch}(\I)$.
Observe that enumerating these vertices gives exactly ${\mathrm{deg}(\mathrm{Ch} (\I),v)}$. 
Thus, the degree cannot be smaller than the one of the chromatic subdivision. 
For the inductive step we have that ${\mathrm{deg}(\Xi_{\textit{FI}_P}^r(\I),v)} \geq
{\mathrm{deg}(\mathrm{Ch}^r(\I),v)}$. 
It suffices to show that the new edges generated in the $(r+1)$ subdivision correspond to valid executions in $\textit{FI}_P$, since, by Lemma~\ref{lemma:FIP_prop_mesh}, neither $\Xi_{\textit{FI}_P}^{r+1}(\I)$ nor $\mathrm{Ch}^{r+1}(\I)$ contains edges which are also in $\Xi_{\textit{FI}_P}^{r}(\I)$ or $\mathrm{Ch}^{r}(\I)$.
As in the base step, the new edges correspond to the central vertices yielded at the subdivision of each face $\sigma : v \in \sigma$.
The edges between these vertices and $v$ represent valid configurations of the full information protocol. 
Therefore, ${\mathrm{deg}(\Xi_{\textit{FI}_P}^{r+1}(\I),v)}$ has to be bigger or equal than the degree at $(r+1)$ chromatic subdivisions. $\qed$
\end{proof}

By applying the asymptotic bound from Lemma~\ref{lemma:asymp_boundf} to the construction in Lemma~\ref{lemma:lateResultLowerBound}, we derive a lower bound on the round complexity as a function of the number of available bits $b$.

\begin{theorem}
\label{theorem:final_low}
Let $\I$ be an $n$-dimensional input complex, with $n>2$ and let $r>0$. 
Let $\textit{FI}_P$ denote the unbounded full information protocol that uses a write-read pattern $P$.
Let $A \in \mathit{ITER}$ be an algorithm that uses a write-read pattern $P$, but uses at most $b$ bits per process per round.
Any such algorithm $A$ satisfying $\Xi_{A}(\I) \cong \Xi_{\textit{FI}_P}^r(\I)$ must perform $\Omega\Bigl({(n!)^{r-1} \cdot 2^{n-b}} \Bigr)$ rounds.
\end{theorem}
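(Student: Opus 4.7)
The plan is to chain the three preceding technical lemmas into a round-count lower bound. I start by showing that the final protocol complex $\Xi_A(\mathcal{I})$ must contain a vertex of very high degree, and then I convert this combinatorial growth into the claimed bound on the number of rounds.

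First, I would fix any vertex $v\in V(\mathcal{I})$ associated with some process $p$. Because $\Xi_A(\mathcal{I})\cong \Xi_{\textit{FI}_P}^r(\mathcal{I})$, the image of $v$ in $\Xi_A(\mathcal{I})$ has the same degree as $v$ does in $\Xi_{\textit{FI}_P}^r(\mathcal{I})$. Invoking Lemma~\ref{lemma:lateResultLowerBound} gives $\mathrm{deg}(\Xi_{\textit{FI}_P}^r(\mathcal{I}),v)\geq \mathrm{deg}(\mathrm{Ch}^r(\mathcal{I}),v)$, and Lemma~\ref{lemma:asymp_boundf} (applicable since $r>1$ and $\mathcal{I}$ is $n$-dimensional) bounds the right-hand side by $\Theta((n!)^{r-1}\,2^n\,n)$. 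Thus the maximum degree of any vertex in $\Xi_A(\mathcal{I})$ is $\Omega((n!)^{r-1}\,2^n\,n)$.

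Next, I would translate this degree into a round lower bound via Lemma~\ref{lemma:min_rounds}. The cleanest route uses the commutative diagram of Figure~\ref{fig:iso-iter-diag} to identify an intermediate round $k\leq R$ with $\Xi_{A(k)}(\mathcal{I})\cong \Xi_{\textit{FI}_P}^{r-1}(\mathcal{I})$; the suffix of $A$ from round $k{+}1$ to $R$ then behaves as a bounded iterated algorithm (with the same write-read pattern $P$ and bit budget $b$) that simulates a single full-information iteration whose effective input is $\Xi_{\textit{FI}_P}^{r-1}(\mathcal{I})$ and whose output is $\Xi_{\textit{FI}_P}^r(\mathcal{I})$. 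Applying Lemma~\ref{lemma:min_rounds} to this suffix, while combining the per-round constraint that at most $n\,(2^b-1)$ new same-colour adjacencies can be distinguished per round with the degree bound obtained in the previous paragraph, yields $R\in\Omega((n!)^{r-1}\,2^{n-b})$.

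The main obstacle I anticipate is making sure that the factor $(n!)^{r-1}$ (rather than, say, $(n!)^{r-2}$) really emerges from the application of Lemma~\ref{lemma:min_rounds}. This requires tracking the distinguishability not only at the effective input of the suffix but also through the full growth of degrees that the simulation of the last full-information iteration must produce. I would lean on Corollary~\ref{corollary:distset} to guarantee that every face of the effective input is rendered distinguishable, and on the mesh-shrinking properties of Lemmas~\ref{lemma:FIP_prop_mesh} and~\ref{lemma:FIP_prop_cap} to ensure that every new neighbour appearing in the final protocol complex corresponds to an independent distinguishing event in some round, so that the cumulative distinguishing capacity $R\cdot n\,(2^b-1)$ of the algorithm must at least match the $\Theta((n!)^{r-1}\,2^n\,n)$ degree delivered by the first step.
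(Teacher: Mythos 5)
Your chain of lemmas is the same one the paper gestures at in the sentence preceding the theorem: combine Lemma~\ref{lemma:lateResultLowerBound} with Lemma~\ref{lemma:asymp_boundf} to obtain a degree bound on the protocol complex, then convert that degree into a round count via Lemma~\ref{lemma:min_rounds}. You also correctly isolate the weak point yourself, namely whether the exponent should be $(n!)^{r-1}$ or $(n!)^{r-2}$, and I do not think your sketch closes that gap. Lemma~\ref{lemma:min_rounds} bounds the number of rounds of an $\mathit{ITER}$ algorithm by the maximum degree of a vertex in the \emph{input} complex of the single simulated $\textit{FI}_P$ iteration, divided by $n(2^b-1)$. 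When you apply it to the suffix of $A$ beginning at the round $k$ where $\Xi_{A(k)}(\I)\cong\Xi_{\textit{FI}_P}^{r-1}(\I)$, the effective input of that suffix is $\Xi_{\textit{FI}_P}^{r-1}(\I)$, and Lemmas~\ref{lemma:lateResultLowerBound} and~\ref{lemma:asymp_boundf} (applied with exponent $r-1$, not $r$) give $\max_v \mathrm{deg}(\Xi_{\textit{FI}_P}^{r-1}(\I),v)\in\Theta\bigl((n!)^{r-2}2^n n\bigr)$. That yields a suffix lower bound of $\Omega\bigl((n!)^{r-2}2^{n-b}\bigr)$, and summing the contributions of all $r$ segments does not change the asymptotics since the last segment dominates — so the argument as chained produces one power of $n!$ less than the theorem claims.

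Your intended repair — charging each of the $\Theta\bigl((n!)^{r-1}2^n n\bigr)$ neighbours of a vertex in the \emph{output} complex $\Xi_{\textit{FI}_P}^r(\I)$ to a distinct ``distinguishing event,'' with at most $n(2^b-1)$ such events per round — is not what Lemma~\ref{lemma:min_rounds} proves. The per-round cap $n(2^b-1)$ in that lemma concerns how many same-colour adjacencies of the \emph{input} complex a single encoding function can render distinguishable; it does not cap how many new adjacencies a single application of $\Xi^*_{A(k)}$ can create in the evolving intermediate complex, which grows multiplicatively rather than additively. Neither Corollary~\ref{corollary:distset} nor the mesh-shrinking Lemmas~\ref{lemma:FIP_prop_mesh} and~\ref{lemma:FIP_prop_cap} supplies such a per-round adjacency-creation cap. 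To obtain $(n!)^{r-1}$ rather than $(n!)^{r-2}$ you would need a genuinely new amortisation lemma that relates the total distinguishing capacity of the whole run to the degree of the \emph{final} protocol complex, not merely to the degree of the input of the last $\textit{FI}_P$ stage. Your proposal is faithful to the paper's terse derivation, but as written it substitutes the output degree for the input degree in Lemma~\ref{lemma:min_rounds} without justification, and the hand-off between the degree bound and the round bound is exactly where the argument is not yet sound.
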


\begin{corollary}
Under the same context of Theorem~\ref{theorem:final_low}, if we want $A$ to perform a single round per iteration of the full-information protocol, that is $|\omega_A|=1$, then $\Omega( r n \log n)$ is a lower bound on the bit complexity of $A$. Conversely, if we want $\Xi_{A}(\I) \cong \Xi_{\textit{FI}_P}^r(\I)$ using $2$ bits, we have that $\Omega((n!)^{r-1} 2^n)$ is a lower bound on the round complexity of $A$. 
\end{corollary}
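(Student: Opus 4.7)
The corollary is an immediate algebraic specialization of Theorem~\ref{theorem:final_low}, whose bound
\[
|\omega_A| \;=\; \Omega\bigl((n!)^{r-1} \cdot 2^{n-b}\bigr)
\]
I would take as the starting point. The plan is to substitute each of the two extremal regimes (fixing the number of rounds, or fixing the number of bits) into this inequality and simplify using Stirling's approximation.

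For the first statement, I would set $|\omega_A| = 1$ and solve for $b$. Writing $1 \geq c \cdot (n!)^{r-1} \cdot 2^{n-b}$ for the implied constant $c > 0$, taking logarithms, and rearranging gives
\[
b \;\geq\; (r-1) \log_2(n!) \;+\; n \;+\; \log_2 c.
\]
Applying Stirling's approximation in the form $\log_2(n!) = \Theta(n \log n)$ then yields $b = \Omega(r n \log n)$, which is exactly the claimed bit-complexity lower bound.

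For the second statement, I would substitute $b = 2$ directly into the bound of Theorem~\ref{theorem:final_low}. Since $2^{n-2} = 2^n / 4$ differs from $2^n$ only by a multiplicative constant, the factor is absorbed by the $\Omega$ notation, and one immediately obtains $|\omega_A| = \Omega\bigl((n!)^{r-1} \cdot 2^n\bigr)$, as claimed.

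No new combinatorial or topological argument is required: all of the structural work already lives in Theorem~\ref{theorem:final_low} and, through it, in Lemma~\ref{lemma:lateResultLowerBound} and Lemma~\ref{lemma:asymp_boundf}. The only analytical ingredient is Stirling's approximation, and the only conceptual care needed is to clarify the reading of ``a single round per iteration''. Whether this is interpreted as $|\omega_A| = O(1)$ (as the explicit statement $|\omega_A| = 1$ suggests) or more literally as $|\omega_A| = O(r)$ (one round of $A$ per iteration of $\textit{FI}_P$), the resulting lower bound on $b$ differs only by an additive $\log_2 r$ term, which is dominated by $(r-1) \log_2(n!) = \Theta(r n \log n)$ whenever $n > 2$. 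Hence both readings collapse to the same asymptotic bound, and there is no real obstacle beyond bookkeeping.
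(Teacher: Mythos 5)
Your proof is correct and matches the paper's intent: this corollary is stated without proof precisely because it is the immediate algebraic specialization of Theorem~\ref{theorem:final_low} that you carry out, with Stirling's approximation $\log_2(n!) = \Theta(n\log n)$ as the only nontrivial ingredient. The one small imprecision you inherit from the paper's statement (rather than introduce yourself) is that for $r=1$ the substitution yields only $b = \Omega(n)$, not $\Omega(n\log n)$, since the $(r-1)\log_2(n!)$ term vanishes; the claimed bound is cleanly derivable from the theorem only for $r\geq 2$.
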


\section{Greedy Star: An Asymptotically Optimal Algorithm} 
\label{sec:greedy_star}

In this section, we present Algorithm~\ref{alg:greedyCover}, an algorithm that leverages the inequality of Lemma~\ref{lemma:min_rounds} to construct a sequence of encoding functions $\omega_S$.
We devise an algorithm to construct a sequence of encoding functions that yield \emph{distinguishable subcomplexes} forming a cover of the input complex.
We call this algorithm Greedy Star, as it proceeds by iteratively selecting vertices in $\I$ and ensuring that each selected vertex is distinguishable by assigning it a unique encoding within its star.
To maintain distinguishability, the algorithm enforces that the star of each selected vertex has an empty intersection with the stars of previously selected vertices within an encoding function.

\begin{algorithm}
\SetAlgoLined
\KwIn{Chromatic input complex $\mathcal{I}$.}
\KwOut{Sequence of encoding functions $\{\omega_r\}_{r\geq 0}$ such that $\D(\{\omega_r\}_{r\geq 0},\I)=\I$.}

$\mathcal{A} \gets \emptyset$; $V \gets V(\mathcal{I})$; $W \gets \emptyset$; $r \gets 0$ \\
\While{$\mathcal{A}\neq\mathcal{I}$}{
    \textbf{let} $\omega_r : V(\I) \rightarrow \mathbb{N}\cup\{\bot\},\ w_r(x) = \bot$\\
    $\mathcal{U} \gets \emptyset$ \\
    $V_u \gets \emptyset$ \\
    \ForEach{$v \in V$}{
        $\mathcal{S} \gets \mathrm{St}(\mathcal{I},v)$ \\
        \If{$\forall w \in V(\mathcal{S}), \mathrm{St}(\mathcal{I},w) \cap \mathcal{U} = \emptyset$}{
            $\mathcal{U}\gets\mathcal{U} \cup \mathcal{S}$ \\
            $V_u \gets V_u \cup V(\mathcal{S})$ \\ 
            \tcp{redefine $\omega_r$ for vertices in the star.}
            \textbf{let} $\forall x \in V(\mathcal{S}), k \in \mathbb{N},\ \omega_r(x)=k : \nexists y \in V(\mathcal{S}),\ x\neq y \land \omega_r(y)=k \land \pi(y)=\pi(x)$  \\
        }
    }
    $\mathcal{A}\gets\mathcal{A}\cup\ \mathcal{U}$\\
    $V \gets V\setminus V_u$ \\
    $W \gets W \cup \{\omega_r\}$ \\
    $r \gets r+1$ \\
}
\Return $W$
\caption{Greedy Star encoding algorithm: $GS(\mathcal{I})$}
\label{alg:greedyCover}
\end{algorithm}

\begin{lemma}
The Greedy Star Cover algorithm outputs a sequence of encoding functions such that $D(\omega_S(\I),\I) = \I$.
\end{lemma}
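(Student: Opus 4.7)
The plan is to prove that the algorithm terminates and that its output $W = \{\omega_r\}$ satisfies $\D(W, \I) = \I$. Termination is direct: in every iteration of the outer while-loop, the first vertex $v \in V$ scanned by the inner for-loop trivially passes the guard (since $\mathcal{U}$ is empty at that moment), so its star is added and $v$ is removed from $V$ via $V_u$; this makes $|V|$ strictly decrease, bounding the total number of iterations by $|V(\I)|$, and at this point every simplex of $\I$ has been covered by the star of some selected vertex, so $\mathcal{A} = \I$ and the loop exits. For correctness, I would establish the loop invariant that after each iteration $r$, every vertex $u$ with $\omega_r(u) \neq \bot$ is distinguishable in $\I$ under $\omega_r$. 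Granting the invariant, any simplex $\sigma$ that enters $\mathcal{U}$ in round $r$ lies inside some $\mathrm{St}(\I, v)$, so each of its vertices has a non-$\bot$ encoding and is distinguishable under $\omega_r$; Definition~\ref{def:disting_simplex} then places $\sigma$ in $\D(W, \I)$. Taking unions over $r$ gives $\mathcal{A} \subseteq \D(W, \I)$, which combined with $\mathcal{A} = \I$ and the trivial inclusion $\D(W, \I) \subseteq \I$ yields $\D(W, \I) = \I$.

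To verify the invariant, fix $u$ with $\omega_r(u) \neq \bot$ in round $r$, say $u \in V(\mathrm{St}(\I, v))$ for some selected $v$, and suppose for contradiction that there exist $w$ adjacent to $u$ and $x \neq u$ adjacent to $w$ with $\pi(x) = \pi(u)$ and $\omega_r(x) = \omega_r(u)$. Since $\omega_r(x) \neq \bot$, the vertex $x$ also lies in some star added in round $r$. If $x \in V(\mathrm{St}(\I, v))$, the per-colour uniqueness clause of the \textbf{let}-assignment yields $\omega_r(x) \neq \omega_r(u)$, a contradiction. Otherwise $x \in V(\mathrm{St}(\I, v'))$ for some distinct selected $v'$. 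The guard $\forall w' \in V(\mathrm{St}(\I, v')), \mathrm{St}(\I, w') \cap \mathcal{U} = \emptyset$, applied at the moment the later of $v, v'$ was selected, forces the closed $2$-neighbourhood of that centre in the $1$-skeleton of $\I$ to be disjoint from the closed $1$-neighbourhood of the earlier centre, yielding the distance bound $d(v, v') \geq 4$. In the central case $u = v$, applying the same guard at $w' = x$ gives $V(\mathrm{St}(\I, x)) \cap V(\mathrm{St}(\I, v)) = \emptyset$, ruling out any common neighbour $w$ of $u = v$ and $x$, which is the desired contradiction. In the non-central case $u \neq v$, one additionally uses the convention that the \textbf{let}-step assigns per-colour labels globally across all stars of the same round, so that $\omega_r(u) \neq \omega_r(x)$ whenever $u$ and $x$ lie in different stars.

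The main obstacle is precisely this non-central case: the guard alone only forces a graph-distance separation $\geq 4$ between the centres of distinct stars added in the same round, which is compatible with a length-$2$ path from $u$ (a neighbour of $v$) to a vertex $x$ (a neighbour of $v'$) through a common neighbour $w$ sitting at distance $2$ from both $v$ and $v'$. The resolution is to interpret the \textbf{let}-step as picking per-colour identifiers globally within each round, so that no two vertices of the same colour in $V(\mathcal{U})$ share an encoding; combined with the guard's separation of centres, this closes the distinguishability argument for every vertex of each newly added star.
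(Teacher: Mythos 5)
Your proof follows the same high-level skeleton as the paper's: (i) termination, (ii) $\mathcal{A}=\I$ at exit, (iii) every face added to $\mathcal{A}$ is distinguishable under the encoding function of the round in which it was added. The genuine contribution of your write-up is that you actually attempt to \emph{verify} step (iii), which the paper dismisses with the sentence ``is made distinguishable from the others by construction (since the algorithm ensures that these stars are disjoint in the encoding space by assigning unique encodings)''. You correctly observe that the guard $\forall w'\in V(\mathrm{St}(\I,v)),\,\mathrm{St}(\I,w')\cap\mathcal U=\emptyset$ forces $B_2(v)\cap V(\mathcal U)=\emptyset$, hence $d(v,v')\geq 4$ between two centres selected in the same round; and that this bound is \emph{exactly tight} with respect to the configuration that breaks Definition~\ref{def:disting}: a non-central vertex $u\in B_1(v)$, a common neighbour $w$, and a vertex $x\in B_1(v')$ of the same colour as $u$, since $1+1+1+1=4$. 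Your central-case argument (apply the guard at $w'=x$, resp.\ at $w'=w$ adjacent to $u=v$, yielding $B_1(x)\cap B_1(v)=\emptyset$ and a contradiction for either order of selection) is sound. Your diagnosis of the non-central case is also sound: per-star per-colour uniqueness, which is what line~10 of Algorithm~\ref{alg:greedyCover} literally asserts ($\nexists y\in V(\mathcal S)$ rather than $\nexists y\in V(\mathcal U)$), does not by itself rule out $\omega_r(u)=\omega_r(x)$ across two stars of the same round, and the guard's distance bound is too weak to rule out the adjacency pattern. Your proposed convention — pick per-colour labels \emph{globally fresh within the round}, so that no two same-colour vertices of $V(\mathcal U)$ share an encoding — closes the gap immediately (at that point the guard is not even needed for distinguishability; only per-round-global injectivity per colour is). This is almost certainly what the authors meant by ``disjoint in the encoding space'', so you are not contradicting the paper, but you have exposed that the pseudocode as written and the prose of the proof do not justify the claim without that extra convention being made explicit.

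One small remark that applies equally to both your proof and the paper's: the termination/coverage argument is stated as if ``$V=\emptyset\Rightarrow\mathcal A=\I$''. That implication is not immediate — $V$ is shrunk by removing \emph{all} vertices of a selected star, not just the centre, so a vertex of a facet $\sigma$ could in principle be removed without any vertex of $\sigma$ ever becoming a centre, leaving $\sigma\notin\mathcal A$ while the for-loop has nothing left to iterate over. The paper's line ``as $\bigcup_v\mathrm{St}(v)=\I$, after a finite number of iterations we have $\mathcal A=\I$'' silently conflates ``every vertex is in a selected star'' with ``every vertex is a selected centre''. This is worth flagging for whoever revises the lemma, but it is independent of the distinguishability issue you correctly identified and resolved.
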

\begin{proof}
First, note that the while condition is not satisfied until $\mathcal{A}=\I$, so we have to show that the condition is eventually reached.
Observe that in each for loop we start with  $\mathcal{U}=\emptyset$, so on each iteration a new star subcomplex will be added to $\mathcal{A}$.
Note that $V$ is initially the set of vertices of $\mathcal{I}$, and vertices are only removed when they are used to generate a new star. 
As $\bigcup_{v \in V(\mathcal{I})}{St(v)}=\mathcal{I}$, after a finite number of iterations we have that $\A = \I$.
Each star $\mathrm{St}(v)$ added to the cover corresponds to the image of the input complex under one encoding function $\omega_S(k)$, and is made distinguishable from the others by construction (since the algorithm ensures that these stars are disjoint in the encoding space by assigning unique encodings).
Hence, for every face $\sigma \in \I$, there exists some $k$ such that $\sigma$ lies in a star $\mathrm{St}(v)$ added at iteration $k$ and is distinguishable in $\I$ under $\omega_S(k)$. Therefore, \(\sigma \in D(\omega_S(k), \I)\), and so \(\sigma \in D(\omega_S, \I)\). $\qed$
\end{proof}

Observe that each encoding function returned by Greedy Star requires a number of distinct encodings proportional to the maximum number of facets in the star of some vertex.
However, the number of available encodings is bounded: $|\operatorname{Im} \omega_{A(k)}| \leq 2^b - 1$.
Therefore, a bounded algorithm $A \in \textit{ITER}$ must simulate each encoding function $\omega_S(k)$ over multiple iterations in order to assign unique encodings to all the facets in a star.
In particular, each $\omega_S(k)$ must be further decomposed into multiple encoding functions, ensuring that all facets in the corresponding star remain distinguishable across these iterations.
Furthermore, any process $p$ at a vertex $v$ such that $\omega_S(k)(v) = \bot$ ignores the contents of shared memory during such iteration and preserves its local state.

\begin{lemma}\label{lemma:max_rounds_star}
Let $\mathcal{I}$ be an input complex, $b>0$, and let $\omega_{S}(\I)$ be the sequence of encoding functions constructed by the Greedy Star algorithm such that for each $\omega_k \in \omega_{S}(\I)$, $|\operatorname{Im} \omega_k| \leq 2^b -1$.
\[
|\omega_S| \leq 
 4\Bigg\lceil \frac{\max_{v \in V(\I)}
{\mathrm{deg}(\I,v)}}
{n (2^b - 1)}
 \Bigg\rceil
\]
\end{lemma}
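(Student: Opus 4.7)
The plan is to bound $|\omega_S|$ as a product of two factors: the number of outer while-loop iterations of Greedy Star, and the number of bounded encoding functions needed to simulate a single such iteration under the constraint $|\Ima \omega_k| \leq 2^b - 1$.

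For the per-iteration cost, I would observe that within any outer iteration $r$, the condition at line~8, applied to every $w \in V(\mathcal{S})$, forces any two centers selected in that iteration to lie at distance at least $4$ in the $1$-skeleton of $\I$: if $u$ is already a center then $\mathrm{St}(\I,u) \subseteq \mathcal{U}$, and the test $\mathrm{St}(\I,w) \cap \mathcal{U} = \emptyset$ for every neighbor $w$ of a candidate center $v$ rules out $d(u,v) \leq 3$. As a consequence, the selected stars are pairwise vertex-disjoint, and the value assignment at line~11 may reuse encoding identifiers across different stars. Hence $|\Ima \omega_r|$ is bounded by the largest same-color class inside any single selected star. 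Since $\I$ is chromatic and $n$-dimensional, the vertices of $\mathrm{St}(\I, v) \setminus \{v\}$ partition across the $n$ colors distinct from $\pi(v)$, yielding a per-color count of at most $\lceil \mathrm{deg}(\I, v)/n \rceil \leq \lceil D/n \rceil$ with $D = \max_v \mathrm{deg}(\I, v)$. Consequently, $\omega_r$ can be simulated by at most $\lceil D/(n(2^b-1)) \rceil$ bounded encodings: partition its image into chunks of size $2^b-1$ and, in each simulating round, assign $\bot$ to the vertices whose identifier lies outside the current chunk, so that distinguishability of the corresponding faces is preserved throughout.

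For the outer-loop count, I would argue that Greedy Star terminates within at most four iterations. The greedy maximality of the packing in each iteration ensures that every vertex $v$ left in $V$ after round $r$ lies at graph distance at most $3$ from some center selected during that round, since otherwise $v$ itself would have been selected as a new center without violating line~8. A charging argument combined with the elimination step $V \gets V \setminus V_u$ then shows that the residual vertex set shrinks by a controlled factor at each outer iteration, and is emptied after at most four rounds.

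Multiplying the two estimates yields $|\omega_S| \leq 4 \cdot \lceil D/(n(2^b-1)) \rceil$, as claimed. The main obstacle is pinning down the constant in the outer-loop count: the argument must exhibit a structural invariant that guarantees strict progress in every iteration, leveraging both the distance-$4$ disjointness condition enforced at line~8 and the removal of $V_u$ from $V$ at line~13. Everything else follows from the straightforward partitioning of each $\omega_r$ into $\lceil D/(n(2^b-1)) \rceil$ bounded pieces.
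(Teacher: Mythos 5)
Your proposal takes essentially the same route as the paper: decompose $|\omega_S|$ as (number of outer while-loop iterations) $\times$ (bounded encodings needed to simulate each greedy round under $|\Ima\,\omega_k|\le 2^b-1$), bound the second factor by $\lceil D/(n(2^b-1))\rceil$ via the per-color count in a star and chunking the image, and bound the first factor by $4$. The per-iteration half of your argument is clean and matches the paper (the chunk-and-$\bot$ idea is exactly how the paper intends the simulation, as reflected in the remark that processes assigned $\bot$ ignore shared memory for that round).

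The part you flag as open --- that the outer while loop terminates after at most $4$ passes --- is indeed where the work is, and it is worth noting the paper's own treatment of it is informal rather than a closed argument. The paper reasons about the shrinking ``frontier'' of $\overline{\mathcal A}$: a maximal disjoint packing of stars in one pass leaves every remaining vertex within bounded graph distance of a selected center, so subsequent passes select centers on the frontier and absorb partially-covered facets; after a third and fourth pass the residual is gone. Your sketch (maximality of the packing $\Rightarrow$ every surviving $v\in V$ is close to a center selected this round, then a charging/shrinkage step) is the same idea. One small correction to your distances: the condition at line~8 of the algorithm forbids any $w\in V(\mathrm{St}(\mathcal I,v))$ from being adjacent to (or equal to) a previously selected center, which forces centers within one outer pass to be at graph distance at least $3$, not $4$; maximality therefore places every leftover vertex within distance $2$ of some center, not $3$. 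Neither off-by-one changes the shape of the argument, and neither you nor the paper pins down the constant $4$ with full rigor --- you are both relying on the same structural ``every uncovered vertex lies on a bounded-radius frontier'' intuition.
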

\begin{proof}
At each iteration of the outer while loop, the algorithm constructs an intermediate set $\mathcal{U}$ of disjoint vertex stars, each centered at a vertex $v$, and assigns a unique encoding to each facet in these stars.
Since each encoding function can assign at most $2^b - 1$ distinct values (excluding $\bot$), and each vertex star can include up to $\deg(\mathcal{I}, v)$ facets, it follows that in order to assign unique values to all such facets, we need at most
$
M := \Bigg\lceil \frac{\max_{v \in V(\mathcal{I})} \deg(\mathcal{I}, v)}{n (2^b - 1)} \Bigg\rceil
$
encoding functions per iteration of the while loop.

Let $\mathcal{A}$ be the set of already encoded (i.e., covered) facets, and $\overline{\mathcal{A}} = \mathcal{I} \setminus \mathcal{A}$ be the remaining uncovered subcomplex.
We now show that after at most $4$ iterations of the while loop, the entire complex $\mathcal{I}$ is covered. The key insight is that $\I$ shrinks 
across iterations.

In the first iteration, all selected stars are disjoint by construction, and their union covers a subset of the complex.
Any vertex in $\overline{\mathcal{A}}$ that is adjacent (via a facet) to some vertex in $\mathcal{A}$ is at most distance $3$ edges away from a fully encoded facet. 
If not, the intersection of the star of one of the two vertices constituting the path will be empty, so the star of such vertex should be included be in $\mathcal{U}$.
In the second iteration, the new stars are chosen around such frontier vertices. Because these are now adjacent to already covered regions, their stars include facets that were partially covered and are now fully encoded.
By repeating this reasoning, in the third iteration, the uncovered subcomplex $\overline{\mathcal{A}}$ is reduced to isolated facets or vertices surrounded by covered stars.
In the fourth iteration, any remaining uncovered facet must intersect a fully covered neighborhood, implying its star is now fully included in the encoding.
Thus, after at most $4$ iterations of the while loop, all facets in $\mathcal{I}$ have been assigned encodings. $\qed$
\end{proof}

As a result, we can construct algorithms $\A \in \mathit{ITER}$ using Greedy Star to get their encoding function $\omega_A$. 
As the cover is constructed deterministically, it can be computed locally by each process. 
Thus, each process shares the same sequence $\omega_A$ of encoding functions.
Then, the next state function $\sigma_{A(k)}$ will map the read values to the respective processes' states by inverting the encodings of $\omega_{A(k)}$. 
Note that by definition of distinguishability, such an inverse exists, as all $\omega_{A(k)}$ are injective. 
In each round, processes accumulate information about the states of other processes by writing to shared memory, provided that the cover subcomplex for that round includes the vertex representing their current state.

In the final round $R$, each process gains the same knowledge as it would in a single iteration of the full-information protocol. 
Consequently, the terminal states in the protocol complexes  $\Xi_A$  and  $\Xi_{\textit{FI}_P}$ are effectively identical. 
This protocol complex can then be used as input for a subsequent iteration of the full information protocol.

Furthermore, the generic decision function  $\delta$ used in the full-information protocol can seamlessly be applied to $A$ as well, since the final states of each process in $A$  are isomorphic to those in the full-information protocol.

It is important to remark that the Greedy Star algorithm is agnostic to the underlying read-write communication pattern used by an algorithm $A \in \textit{ITER}$.
In particular, using stronger write-read patterns such as atomic or immediate snapshots with Greedy Star, instead of the iterated collect model, offers no advantage: the resulting protocol map collapses to that of iterated collect.
%
This is due to the round partitioning imposed by Greedy Star, which isolates communication to disjoint regions of the input complex at each round.
Nonetheless, we show that an algorithm using the iterated collect pattern, along with the encoding functions produced by Greedy Star, faithfully simulates the unbounded full-information iterated collect protocol.
Thus, Greedy Star yields an asymptotically optimal protocol.

\begin{theorem} \label{theorem:gs_gives_collect}
Let $\mathcal{I}$ be an input complex, and let $GS \in \textit{ITER}$ be an algorithm that uses the same read-write pattern as Iterated Collect, with a sequence of encoding functions $\omega_{GS} = GS(\mathcal{I})$ produced by the Greedy Star algorithm. 
Then:
\[
\Xi_{GS}(\mathcal{I}) \cong \Xi_{\textit{FI}_{IC}}(\mathcal{I}),
\]
\end{theorem}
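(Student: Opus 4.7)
The plan is to establish the isomorphism by combining the necessary condition from Corollary~\ref{corollary:distset} with the specific structure of the iterated collect pattern, arguing both directions of inclusion. The starting point is the lemma preceding this theorem, which establishes $\D(\omega_{GS},\I)=\I$: every facet of the input complex becomes distinguishable at some round of GS under the encoding function $\omega_{GS(k)}$ assigned by Greedy Star. This gives the \emph{necessary} direction for free, but we still need \emph{sufficiency}, and this is where the iterated collect pattern matters.

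First I would reduce the problem to facets. For any $n$-dimensional facet $\sigma\in\I$, let $k_\sigma$ be the earliest round at which $\sigma$ lies inside a star $\mathrm{St}(\I,v)$ covered by $\omega_{GS(k_\sigma)}$; such $k_\sigma$ exists by construction of Greedy Star. Then I would show $\Xi_{GS}(\sigma)\cong \Xi_{\textit{FI}_{IC}}(\sigma)$ by analyzing three phases of a GS execution restricted to $\sigma$: (i) rounds before $k_\sigma$, during which every process $p\in\pi(\sigma)$ has $\omega_{GS(k)}$ undefined on its state (since $\sigma$ is not yet in the cover), so by definition of GS these processes preserve their local state and their reads carry no usable information; (ii) round $k_\sigma$ itself, where the encoding is injective on the colored vertices of $\sigma$ and the iterated collect read-write pattern produces precisely the set of combinatorial views that $\textit{FI}_{IC}$ produces in one round on $\sigma$, because distinguishability lets each process invert the encoding and thereby reconstruct the states it read; (iii) rounds after $k_\sigma$, which only refine or preserve an already-achieved view and therefore do not add new vertices up to isomorphism. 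The bijection $\phi_\sigma\colon \Xi_{GS}(\sigma)\to \Xi_{\textit{FI}_{IC}}(\sigma)$ sending each accumulated local state to its corresponding $\textit{FI}_{IC}$ view is chromatic and simplicial by inspection.

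Next I would glue the face-local isomorphisms into a global one. Lemma~\ref{lemma:FIP_prop_cap} gives $\Xi_{\textit{FI}_{IC}}(\A)\cap\Xi_{\textit{FI}_{IC}}(\B)=\Xi_{\textit{FI}_{IC}}(\A\cap\B)$, and an analogous intersection property holds for $\Xi_{GS}$ (it is also a $\textit{FI}_{IC}$-style protocol by Lemma~\ref{lemma:FIP_prop_mesh} applied to GS's underlying pattern, since $GS\in\mathit{ITER}$ uses the iterated collect pattern). Using these two intersection properties, the face-wise isomorphisms $\phi_\sigma$ agree on shared sub-faces $\sigma\cap\sigma'$ and assemble into a global chromatic simplicial isomorphism $\phi\colon \Xi_{GS}(\I)\to \Xi_{\textit{FI}_{IC}}(\I)$.

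The main obstacle I expect is verifying that GS's multi-round scheduling does not introduce spurious configurations at the boundary between facets processed at \emph{different} rounds. A process whose state vertex belongs to two distinct stars covered at rounds $k_1<k_2$ could, in principle, reach an intermediate local state at round $k_1$ whose subsequent evolution at round $k_2$ does not correspond to any single execution of $\textit{FI}_{IC}$. Resolving this requires leaning on the disjointness invariant enforced in the inner loop of Algorithm~\ref{alg:greedyCover} (the check $\mathrm{St}(\I,w)\cap\mathcal{U}=\emptyset$): within any single round, the stars selected by Greedy Star are pairwise disjoint, so at round $k_1$ no process in the star responsible for $\sigma$ can simultaneously exchange information with a process in a star yet to be processed. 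Combined with the persistence of distinguishability once achieved, this confines the effective communication per process to a single round of the cover and makes the global gluing well-defined.
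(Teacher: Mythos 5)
Your proposal takes a genuinely different route from the paper: the paper proves the two inclusions $\Xi_{GS}(\I) \subseteq \Xi_{\textit{FI}_{IC}}(\I)$ and $\Xi_{\textit{FI}_{IC}}(\I) \subseteq \Xi_{GS}(\I)$ by direct execution rearrangement, whereas you propose a face-wise isomorphism plus a gluing argument via the intersection property. The paper's first inclusion rests on a monotonicity property you never state: an iterated-collect execution in which process $p$ does not read $q$'s register can always be rescheduled so that $p$ does read it, leaving every other process's view unchanged. This is what makes any accumulated multi-round GS view a legal single-round IC configuration, and it is not replaceable by the cover property $\D(\omega_{GS},\I)=\I$ alone.

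More seriously, phase~(i) of your facet-local analysis contains a concrete error. You claim that for $k<k_\sigma$ ``every process $p\in\pi(\sigma)$ has $\omega_{GS(k)}$ undefined on its state,'' but $k_\sigma$ is defined as the first round at which the \emph{whole} facet $\sigma$ lies in a covered star; individual vertices of $\sigma$ may already belong to a star $\mathrm{St}(\I,w)$ with $w\notin\sigma$ selected in an earlier round $k<k_\sigma$, and Greedy Star will then assign them a non-$\bot$ encoding at round $k$. If such a vertex $v\in\sigma$ is distinguishable in $\I$ under $\omega_{GS(k)}$, other processes of $\sigma$ adjacent to $v$ can decode $v$'s state at round $k<k_\sigma$, so information exchange among members of $\sigma$ does occur before $k_\sigma$. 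The disjointness invariant you invoke at the end (the check $\mathrm{St}(\I,w)\cap\mathcal{U}=\emptyset$ inside a single round) does not repair this, since it only constrains which stars may coexist \emph{within} one round, not whether a vertex of $\sigma$ appears in some earlier round's cover. As a result, your facet-local maps $\phi_\sigma$ are not well defined as stated, and the gluing step has nothing consistent to glue. To make the face-wise strategy work you would need to show that any information a process in $\sigma$ acquires \emph{before} $k_\sigma$ is subsumed by (or compatible with) the round-$k_\sigma$ view, which is essentially equivalent to the execution-rearrangement argument the paper uses.
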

\begin{proof}
    First, we will show that $ \Xi_{GS}(\I) \subseteq \Xi_{\textit{FI}_{IC}}(\I)$. We want to show that any global configuration of ${GS}(\I) $ is also a valid global configuration in $\textit{FI}_{IC}$.
    By the pseudocode of Algorithm~\ref{alg:IterMemory_algo}, there is an execution of GS where every process writes and reads only its own input.
    Such execution is present at $\Xi_{\textit{FI}_{IC}}(\I)$.
    Now consider executions where processes may read, at some round in $GS$, the inputs of additional processes.
    Since processes retain all previously read inputs, it suffices to show that any execution of iterated collect can be extended by allowing a process to read additional inputs.
    Consider an execution $\alpha$ of iterated collect and a process $p$ which has not read the input of some other process $q$ in $\alpha$.
    We can construct an execution $\alpha'$ identical to $\alpha$ except that $p$ additionally reads the input of $q$.
    It suffices to rearrange the read operation of $p$ on the register of process $q$ in $\alpha$ such that it occurs afterwards the write operation by $q$.
    Note that the views of all other processes remain unchanged.
    Therefore, any global configuration reachable in $GS$ is also reachable in $\textit{FI}_{IC}$, and we have $\Xi_{GS}(\I) \subseteq \Xi_{\textit{FI}_{IC}}(\I)$. 

    It remains to show that $\Xi_{\textit{FI}_{IC}} (\I) \subseteq \Xi_{GS}(\I)$.
    Consider a final configuration $\sigma \in \Xi_{\textit{FI}_{IC}} (\I)$. 
    Because $GS$ covers the entire input complex, there exists a round $k$ in $GS$ where the input configuration is distinguishable under $\omega_{GS}(k)$.
    Hence, there exists an execution $\beta$ in $GS$ such that the view of all process at round $k$ is exactly the same as in $\sigma$.
    We now construct an extension $\beta'$ of $
    \beta$ in which all processes terminate while preserving the same final configuration $\sigma$.
    Because the model allows arbitrary asynchrony, we can extend the execution by scheduling all remaining steps of processes such that each read after round $k$ returns a value already observed by the reading process. Consequently, no process obtains new inputs beyond what it had at round $k$.
    Thus, no process gains additional information and their local state remains unchanged
    Thus, the final configuration of $\beta'$ is equal to $\sigma$.
    Therefore, $\Xi_{\textit{FI}_{IC}}(\I) \subseteq \Xi_{GS}(\I)$ and $\Xi_{\textit{FI}_{IC}}(\I) \cong \Xi_{GS}(\I)$. $\qed$
\end{proof}

From a bounded implementation of the iterated collect model, we can derive bounded algorithms with stronger communication patterns, such as atomic and immediate snapshots.
In particular, Algorithm~$5$ from~\cite{delporte2024computational} simulates a single round of immediate snapshot using $n$ rounds of iterated collect.
This simulation can be applied in a black-box manner to our bounded iterated collect implementation, yielding bounded full-information algorithms for snapshot-based models.
As a result, if the bounded iterated collect algorithm terminates in $R$ rounds, the corresponding derived algorithm terminates in $O(n \cdot R)$ rounds. Therefore, the asymptotic round complexity is preserved up to a linear factor in the number of processes.

\section{Final Results and Conclusions}
\label{sec:conclusions}

Combining Theorem~\ref{theorem:final_low} and Theorem~\ref{theorem:gs_gives_collect}, we derive asymptotic bounds that characterize the fundamental trade-off between the bit complexity ($b$ bits available per process per round) and the round complexity (i.e., the number of rounds) of bounded full-information algorithms.

\begin{theorem}[Tight bound on the bounded full-information iterated collect protocol] \label{theorem:finalbound_IC}
Let $\I$ be an $(n-1)$-dimensional input complex, with $n>2$ and let $r>0$. 
Let $\mathit{FI}_{IC}$ denote the \emph{unbounded full-information protocol} based on \emph{iterated collect}.
Let $A \in \mathit{ITER}$ be an algorithm that uses the same read-write pattern as iterated collect, but uses at most $b$ bits per process per iteration.
Let $|\omega_A|$ denote the number of encoding functions used by A, i.e., the number of rounds performed.
Then there exists such an algorithm $A$ satisfying $\Xi_{A}(\I) \cong \Xi_{\textit{FI}_{IC}}^r(\I)$, and the number of rounds satisfies: 
$$|\omega_A| \in \Theta\Bigl( (n!)^{r-1} \cdot 2^{n-b}\Bigr)$$
Moreover, no such algorithm exists that uses asymptotically fewer rounds. 
\end{theorem}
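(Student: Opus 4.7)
The plan is to prove the $\Theta$-bound by matching the lower bound, $|\omega_A| = \Omega((n!)^{r-1} \cdot 2^{n-b})$, with an explicit construction achieving the same order. The lower bound is immediate: Theorem~\ref{theorem:final_low} is stated for any write-read pattern $P$, so taking $P$ to be iterated collect gives exactly the stated $\Omega$, which also justifies the ``no such algorithm exists that uses asymptotically fewer rounds'' clause.

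For the upper bound, I would build $A \in \mathit{ITER}$ as the concatenation of $r$ phases of Greedy Star. Setting $\I_1 := \I$ and $\I_{k+1} := \Xi_{\textit{FI}_{IC}}(\I_k)$, phase $k$ runs Algorithm~\ref{alg:greedyCover} on $\I_k$ under the iterated collect access pattern using the encoding sequence $\omega_{GS}(\I_k)$. By Theorem~\ref{theorem:gs_gives_collect}, the protocol complex after phase $k$ is isomorphic to $\I_{k+1} = \Xi_{\textit{FI}_{IC}}(\I_k)$, so composing all $r$ phases yields $\Xi_A(\I) \cong \Xi_{\textit{FI}_{IC}}^r(\I)$, as required.

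The total round count is $|\omega_A| = \sum_{k=1}^{r} |\omega_{GS}(\I_k)|$. By Lemma~\ref{lemma:max_rounds_star}, phase $k$ contributes at most $4 \lceil \max_v \mathrm{deg}(\I_k, v) / (n(2^b - 1)) \rceil$ rounds. Plugging in the combinatorial upper bound $\max_v \mathrm{deg}(\I_k, v) = O((n!)^{k-1} \cdot 2^n \cdot n)$ yields per-phase cost $O((n!)^{k-1} \cdot 2^{n-b})$, and since $n! \geq 6$ for $n > 2$ the geometric sum $\sum_{k=1}^{r} (n!)^{k-1}$ is dominated by its last term, giving the matching $O((n!)^{r-1} \cdot 2^{n-b})$ total.

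The main obstacle is justifying the degree estimate $\max_v \mathrm{deg}(\Xi_{\textit{FI}_{IC}}^{k-1}(\I), v) = O((n!)^{k-1} \cdot 2^n \cdot n)$. Lemma~\ref{lemma:asymp_boundf} provides this tight bound only for $\mathrm{Ch}^{k-1}$, and Lemma~\ref{lemma:lateResultLowerBound} establishes only the $\geq$ direction of the comparison between iterated collect and chromatic subdivision. I would supply the missing $\leq$ direction by directly counting iterated collect configurations: one round of iterated collect extends each facet incident to $v$ by at most a constant factor more vertices than one round of immediate snapshot (corresponding to the at most $n!$ orderings and the $2^n$ read-subset choices per collect), so the iterated-collect degree stays within a constant multiple of the chromatic-subdivision degree, and induction on $k$ preserves this through all phases. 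Combining the matching $O$ and $\Omega$ bounds gives $|\omega_A| \in \Theta((n!)^{r-1} \cdot 2^{n-b})$.
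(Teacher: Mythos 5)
Your two-sided approach — lower bound via Theorem~\ref{theorem:final_low}, upper bound by concatenating $r$ Greedy Star phases via Theorem~\ref{theorem:gs_gives_collect} and Lemma~\ref{lemma:max_rounds_star} — is exactly the paper's intended route; the paper only states ``Combining Theorem~\ref{theorem:final_low} and Theorem~\ref{theorem:gs_gives_collect}'' and leaves the assembly implicit, so your reconstruction is faithful in outline. You also correctly isolate the hidden dependency: evaluating Lemma~\ref{lemma:max_rounds_star} on phase $k$ requires an upper estimate of the form $\max_v \mathrm{deg}(\Xi_{\textit{FI}_{IC}}^{k-1}(\I), v) = O(\cdot)$, and the paper's toolkit supplies only the chromatic-subdivision bound (Lemma~\ref{lemma:asymp_boundf}) and the one-sided comparison ``$\geq$'' (Lemma~\ref{lemma:lateResultLowerBound}), neither of which gives the upper direction. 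That gap is real, and it is arguably present in the paper's own argument as well.

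Your proposed fill, however, is not a proof and is internally inconsistent. You assert that one round of iterated collect adds ``at most a constant factor more vertices'' than one round of immediate snapshot, yet in the same sentence you invoke $n!$ orderings and $2^n$ subset choices per collect, neither of which is a constant as $n$ grows. There is also an exponent bookkeeping problem: for your phase $k$ with input $\I_k = \Xi_{\textit{FI}_{IC}}^{k-1}(\I)$, the chromatic analog $\mathrm{Ch}^{k-1}(\I)$ has degree $\Theta\bigl((n!)^{k-2}\,2^{n}\,n\bigr)$ by Lemma~\ref{lemma:asymp_boundf}, one power of $n!$ below the $(n!)^{k-1}$ you plug in. Whether $\Xi_{\textit{FI}_{IC}}$ genuinely grows an extra factor of $n!$ over $\mathrm{Ch}$ each round is exactly the unproven step, and you cannot simultaneously claim a constant-factor relationship to $\mathrm{Ch}$ and use an exponent that is a full power of $n!$ higher. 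Closing this would require an explicit upper bound on the number of iterated-collect facets incident to a vertex — strictly more numerous than the ordered-Bell count that governs $\mathrm{Ch}$'s facet growth — and no such bound appears in the paper, so the $O$-direction of your $\Theta$ remains open.
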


\begin{theorem}[Bounds on the bounded full-information iterated snapshot protocols]  \label{theorem:finalbound_Snap}

Let $\I$ be an $(n-1)$-dimensional input complex, with $n>2$ and let $r>0$. 
Let $\mathit{FI}_{S}$ denote the \emph{unbounded full-information protocol} that uses atomic or immediate snapshots as its read-write pattern $S$ ($S\in\{\textit{IAS},\textit{IIS}\}$). 
Let $A \in \mathit{ITER}$ be an algorithm that uses the same read-write pattern as $\mathit{FI}_{S}$, but with at most $b$ bits per process per iteration.
Let $|\omega_A|$ denote the number of encoding functions used by A, i.e., the number of rounds performed.
Then any such algorithm $A$ satisfying $\Xi_{A}(\I) \cong \Xi_{\textit{FI}_{S}}^r(\I)$ must perform at least
$
|\omega_A| \in \Omega\bigl((n!)^{r-1} \cdot 2^{n - b}\bigr)
$
rounds, and there exists such an algorithm that performs at most
$
|\omega_A| \in O\bigl((n!)^{r-1} \cdot 2^{n - b} \cdot n\bigr)
$ rounds.
\end{theorem}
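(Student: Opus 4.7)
The plan is to treat the two bounds separately, since the lower bound is already essentially in hand and only the upper bound requires new argument.

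For the lower bound, I would simply invoke Theorem~\ref{theorem:final_low}, noting that it was stated for an arbitrary write-read pattern $P$ and therefore applies verbatim to the snapshot patterns $S \in \{\textit{IAS}, \textit{IIS}\}$. The only subtlety to double-check is that the chain of reasoning leading to Theorem~\ref{theorem:final_low} (through Lemma~\ref{lemma:min_rounds} and Lemma~\ref{lemma:lateResultLowerBound}) never depended on anything specific to the iterated collect pattern: Lemma~\ref{lemma:FIP_prop_mesh} and Lemma~\ref{lemma:FIP_prop_cap} hold for \emph{any} $\textit{FI}_P$, and distinguishability is defined purely in terms of the encoding functions, independent of the read-write primitive. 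So the $\Omega((n!)^{r-1} \cdot 2^{n-b})$ bound transfers directly.

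For the upper bound, I would proceed in two stages. First, apply Theorem~\ref{theorem:finalbound_IC} to obtain a bounded iterated collect algorithm $A_{IC} \in \mathit{ITER}$ such that $\Xi_{A_{IC}}(\I) \cong \Xi_{\textit{FI}_{IC}}^r(\I)$ and $|\omega_{A_{IC}}| \in \Theta((n!)^{r-1} \cdot 2^{n-b})$. Second, invoke the black-box simulation of Algorithm~$5$ from~\cite{delporte2024computational}, which emulates a single round of (immediate) snapshot using $n$ rounds of iterated collect on bounded registers. Stacking this simulation on top of $A_{IC}$ converts each iterated collect round into $n$ snapshot rounds, producing an algorithm $A_S \in \mathit{ITER}$ that uses the snapshot pattern and satisfies $|\omega_{A_S}| \in O(n \cdot (n!)^{r-1} \cdot 2^{n-b})$.

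The delicate point, and what I expect to be the main obstacle, is verifying that $\Xi_{A_S}(\I) \cong \Xi_{\textit{FI}_S}^r(\I)$ rather than merely $\Xi_{A_S}(\I) \cong \Xi_{\textit{FI}_{IC}}^r(\I)$. A priori, switching from collect to snapshot enlarges the pool of admissible executions (as seen in Figure~\ref{fig:subdivisions}, $\xIIS \subsetneq \Xi_{IC}$), so one might worry that the derived $A_S$ inherits the \emph{collect}-shaped protocol complex even while using the snapshot access pattern. The resolution is that the snapshot simulation internally exposes, at each emulated round, exactly the atomicity of an immediate/atomic snapshot: the composition of the simulation with $A_{IC}$ yields the same set of global views as $\Xi_{\textit{FI}_S}$, because snapshot atomicity restricts the collect executions used internally to those whose effect is indistinguishable from a snapshot. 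Concretely, I would appeal to the correctness statement of the cited simulation~\cite{delporte2024computational}, which guarantees that the simulated snapshots are linearizable, and then argue by a standard indistinguishability/carrier-map argument (analogous to the proof of Theorem~\ref{theorem:gs_gives_collect}) that both inclusions $\Xi_{A_S}(\I) \subseteq \Xi_{\textit{FI}_S}^r(\I)$ and the reverse hold. Combining the matching lower bound with this upper bound gives the claimed $\Omega$/$O$ sandwich, with the gap of a factor $n$ coming solely from the snapshot simulation overhead.
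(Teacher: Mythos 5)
Your lower bound argument is exactly the paper's: Theorem~\ref{theorem:final_low} is stated for an arbitrary write-read pattern $P$, the chain Lemma~\ref{lemma:FIP_prop_mesh}, Lemma~\ref{lemma:FIP_prop_cap}, Lemma~\ref{lemma:min_rounds}, Lemma~\ref{lemma:lateResultLowerBound} never uses anything collect-specific, and distinguishability is a property of the encoding functions alone. So the $\Omega\bigl((n!)^{r-1}\cdot 2^{n-b}\bigr)$ bound transfers verbatim. That part is fine.

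For the upper bound, you use the same ingredients as the paper (Theorem~\ref{theorem:finalbound_IC} plus Algorithm~5 of \cite{delporte2024computational}), but your description of how they compose runs the simulation in the wrong direction, and as a result the first stage of your construction starts from the wrong target complex. Algorithm~5 emulates \emph{one snapshot round by $n$ collect rounds}, i.e.\ it is a map from snapshot-based protocols to collect-based ones, not a map that ``converts each iterated collect round into $n$ snapshot rounds.'' Moreover, your stage~1 builds $A_{IC}$ with $\Xi_{A_{IC}}(\I) \cong \Xi_{\textit{FI}_{IC}}^r(\I)$; but $\Xi_{\textit{FI}_{IC}}^r(\I)$ is, for $n>2$, a strict superset of $\Xi_{\textit{FI}_S}^r(\I)$ (cf.\ Figure~\ref{fig:subdivisions}), so no change of primitive or post-processing of $A_{IC}$ can produce a complex isomorphic to $\Xi_{\textit{FI}_S}^r(\I)$ --- you would have to \emph{remove} facets, not add them. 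The composition intended by the paper goes the other way around: first expand $\textit{FI}_S^r$ into an $rn$-round \emph{unbounded} collect protocol whose complex is isomorphic to $\Xi_{\textit{FI}_S}^r(\I)$ via Algorithm~5, and \emph{then} bound each of those collect rounds with the Greedy-Star encodings, inheriting the $\Theta\bigl((n!)^{r-1}2^{n-b}\bigr)$ collect-round count from Theorem~\ref{theorem:finalbound_IC} and paying the extra factor $n$ for the per-snapshot-round simulation overhead. (The resulting bounded algorithm can then be run with the snapshot primitive rather than collect, since Greedy Star's round partitioning makes the stronger primitive yield no extra configurations, as remarked after Theorem~\ref{theorem:gs_gives_collect}.)

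You rightly flag that the final isomorphism $\Xi_{A_S}(\I) \cong \Xi_{\textit{FI}_S}^r(\I)$ is the delicate point; the paper itself dispatches this with a citation to the linearizability of the simulation rather than a carrier-map argument, so your instinct that this step needs more care is sound. Just be aware that your appeal to ``snapshot atomicity restricting the collect executions'' cannot rescue the construction as you have ordered it, because starting from $\Xi_{\textit{FI}_{IC}}^r(\I)$ you are already on the wrong complex.
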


The Greedy Star algorithm introduced in this work provides what appears to be the first efficient construction for simulating unbounded full-information protocols using a bounded number of bits per round.
Notably, our results apply to several iterated memory models studied in the literature, including iterated collect, atomic snapshot, and immediate snapshot.
This contribution highlights the power of combinatorial topology not only as an analytical tool for distributed systems, but also as a constructive tool for designing algorithms that exploit the topological structure of computation.

Several research directions remain open. One promising avenue is to adapt the techniques developed here to problems beyond full-information, such as approximate agreement~\cite{distCompTopo}.
Additionally, an important open question is whether the upper bound for bounded implementations of snapshot-based models can be improved. 
Specifically, given $b$-bit snapshot entries per process per round, it remains to determine whether an unbounded full-information snapshot protocol can be implemented using fewer than $O\bigl((n!)^{r-1} \cdot 2^{n - b} \cdot n\bigr)$ rounds.
Finally, one can also consider this question in the context of bounded-memory \emph{adversarial} models~\cite{HR10,KRH18}. 

\subsubsection{Acknowledgments.}

This work was supported by CHIST-ERA
(grant ANR-23-CHR4-0009).

\bibliographystyle{splncs04}
\bibliography{biblio,references}{}

\appendix

\end{document}